\newcommand{\lref}[2][]{\hyperref[#2]{#1~\ref*{#2}}}
\renewcommand{\eqref}[1]{\hyperref[#1]{(\ref*{#1})}}
\def\showauthornotes{1}
\def\showdraftbox{0}
\newtheorem{theorem}{Theorem}[section]
\newtheorem{lemma}[theorem]{Lemma}
\def\FullBox{\hbox{\vrule width 6pt height 6pt depth 0pt}}
\def\qed{\ifmmode\qquad\FullBox\else{\unskip\nobreak\hfil
\penalty50\hskip1em\null\nobreak\hfil\FullBox
\parfillskip=0pt\finalhyphendemerits=0\endgraf}\fi}
\def\qedsketch{\ifmmode\Box\else{\unskip\nobreak\hfil
\penalty50\hskip1em\null\nobreak\hfil$\Box$
\parfillskip=0pt\finalhyphendemerits=0\endgraf}\fi}
\newenvironment{proof}{\begin{trivlist} \item {\bf Proof:~~}}
   {\qed\end{trivlist}}
\newcommand\R{\mathbb R}
\newcommand{\marginlabel}[1]%
{\mbox{}\marginpar{\it{\raggedleft\hspace{0pt}#1}}}
\definecolor{Mygray}{gray}{0.8}
\let\csname ifcommentflag\expandafter\endcsname
\newcommand{\Authornote}[2]{{\sf\small\color{red}{[#1: #2]}}}
\newcommand{\Authoredit}[2]{{\sf\small\color{red}{[#1]}\color{blue}{#2}}}
\newcommand{\Authorcomment}[2]{{\sf \small\color{Mygray}{[#1: #2]}}}
\newcommand{\Authorfnote}[2]{\footnote{\color{red}{#1: #2}}}
\newcommand{\Authorfixme}[1]{\Authornote{#1}{\textbf{??}}}
\newcommand{\Authormarginmark}[1]{\marginpar{\textcolor{red}{\fbox{
#1:!}}}}
\newcommand{\Authornote}[2]{}
\newcommand{\Authoredit}[2]{}
\newcommand{\Authorcomment}[2]{}
\newcommand{\Authorfnote}[2]{}
\newcommand{\Authorfixme}[1]{}
\newcommand{\Authormarginmark}[1]{}
\newlength{\pgmtab}  
\newcounter{lecnum}
\newlength{\tpush}
\newcommand{\F}{{\mathbb F}}
\newcommand{\Q}{{\mathbb Q}}
\newcommand{\NP}{{\mathsf {NP}}}
\newcommand{\PTIME}{{\mathsf {P}}}
\newcommand{\PSPACE}{{\mathsf {PSPACE}}}
\newcommand{\sgn}{{\mathsf {sign}}}
\newcommand{\minrank}{{\mathsf {minrank}}}
\newcommand{\rank}{{\mathsf {rank}}}
\newcommand{\mat}{{\mathsf {Mat}}}
\newcommand{\spn}{{\mathsf {span}}}
\newcommand{\spm}{sign pattern matrix}
\newcommand{\spms}{sign pattern matrices}
\newcommand{\gspm}{generalized sign pattern matrix}
\newcommand{\gspms}{generalized sign pattern matrices}
\newtheorem{cor}{Corollary}
\newtheorem{df}{Definition}
\newtheorem{obs}{Observation}
\begin{document}

\title{The complexity of computing the minimum rank of a sign pattern matrix
\footnote{After this paper as posted online, we learnt about the paper~\cite{Basri09}, which gave hardness results and algorithms for
sign rank several years earlier. On the algorithmic side, \cite{Basri09} gave a proof of Theorem~\ref{thm:algo}, which is closely related to our proof. For the hardness results, \cite{Basri09} also used Shor's result~\cite{shor} to prove Theorem~\ref{thm:main} on generalized sign pattern matrices, and indicated that
similar ideas could be used to prove Theorem~\ref{thm:strictmain} on sign pattern matrices. Our paper gives a detailed proof of Theorem~\ref{thm:strictmain}, using a structural result on uniform pseudoline arrangments, Lemma~\ref{lemma:suff}. Hence we believe our paper may still be of
some interest. }}

\author{Amey Bhangale
\thanks{Department of Computer Science. Rutgers University. Research supported in part by NSF grant CCF-1253886. {\tt amey.bhangale@rutgers.edu}}
\and 
Swastik Kopparty \thanks{Department of Mathematics \& Department of
  Computer Science. Rutgers University. Research supported in part by a Sloan Fellowship and NSF grant CCF-1253886. {\tt swastik.kopparty@rutgers.edu}}
}
\maketitle

\begin{abstract}
We show that computing the minimum rank of a sign pattern matrix is NP hard. 
Our proof is based on a simple but useful connection between 
minimum ranks of sign pattern matrices and the stretchability problem
for pseudolines arrangements. In fact, our hardness result shows that it is already hard to determine
if the minimum rank of a sign pattern matrix is $\leq 3$. We complement this by giving a polynomial time algorithm for determining if a given sign pattern
matrix has minimum rank $\leq 2$.

Our result answers one of the open problems from \cite{Linialetal} [{\em Combinatorica}, 27(4):439--463, 2007].
\end{abstract}


\section{Introduction}

We study the complexity of computing the minimum rank 
of a sign pattern matrix.  The minimum rank
of a sign pattern matrix is a basic quantity that arises naturally 
in a number of settings, and has been extensively studied.
 Our main result shows that it is $\NP$ hard to compute the minimum
rank of a sign pattern matrix. In fact, we show that it is even $\NP$ hard
to distinguish between sign pattern matrices with minimum rank $= 3$ from 
sign pattern matrices with minimum rank $ > 3$. Our proof proceeds
via a simple but powerful connection between the rank of a sign pattern matrix
and a classical problem in oriented matroid theory and computational
geometry: the stretchability problem for pseudoline arrangements.
We complement our hardness results by giving a polynomial time algorithm
to detect if a given sign pattern matrix has minimum rank $\leq 2$.

We begin by setting up some notation. A {\spm}
is a matrix whose entries come from the set $\{+, - \}$.
If the entries come from $\{+, -, 0\}$, we call the matrix a 
{\em generalized} {\spm}.
Given a real matrix $A$, one can consider its associated
generalized sign pattern matrix $\sgn(A)$ (obtained by taking the sign of each entry).
The minimum rank of a generalized sign pattern matrix $S$, denoted $\minrank(S)$,
is defined to be the minimum, over all real matrices $A$ with $\sgn(A) = S$,
of $\rank(A)$.

The minimum rank of a sign pattern matrix has been
widely studied in the discrete mathematics and theoretical computer
science communities. On the discrete mathematics side, there has
been much recent work on the relationship between the minimum rank
of a sign pattern matrix and the combinatorial
structure of associated bipartite graph. 
The recent workshop at the American Institute for Mathematics \cite{AIM}
provides and excellent overview of some of the
facets of this line of work.
On the theoretical computer science side, the minimum rank
of a sign pattern matrix gives an exact characterization of
the randomized unbounded-error communication complexity (i.e., the communication
complexity analogue of $\mathsf{PP}$). This has motivated a number of interesting
problems and results on minimum ranks of sign pattern matrices.
Rank minimization of other classes of matrices has also been studied
in the context of machine learning and computational linear algebra.

We now list some of the basic results and problems on sign pattern matrices
arising from these two communities.

\begin{enumerate}
\item As observed by Berman et al~\cite{Hogben}, computing the minimum rank of a sign pattern matrix
easily reduces to checking satisfiability a system of real polynomial equalities and inequalities.
This problem is known as ``the existential first order theory of the reals", and its complexity
is known to lie somewhere between $\NP$ and $\PSPACE$~\cite{Renegar}.

\item For a subfield $\F \subseteq \R$ and a sign pattern matrix $S$,
denote by $\minrank_{\F}(S)$ the minimum rank over all matrices
$A$ with entries in $\F$ such that $\sgn(A) = S$. Arav et al\cite{Aravetal}
showed that if $S$ is a {\gspm}
such that $\minrank_{\R}(S) \in \{1, 2\}$, then
$\minrank_{\F}(S) = \minrank_{\R}(S)$.  Berman et al~\cite{Hogben}
and Kopparty and Rao~\cite{KR-minrank} showed that in general,
the minimum rank depends on the field: there exist
$S$ with $\minrank_{\R}(S) = 3$ but $\minrank_{\Q}(S) > 3$.
In fact, the behavior of $\minrank_{\F}(S)$ depends intricately on
$\F$ (see~\cite{KR-minrank}).

For {\spms} $S$, we always have
$\minrank_{\R}(S) = \minrank_{\F}(S)$~\cite{Aravetal}.

\item
A basic result of Alon et al, \cite{Alonetal} shows that the minimum rank of 
every $n \times n$ {\spm} is at most $\frac{n}{2} ( 1+ o(1))$.
\cite{Alonetal} also show that there are $n \times n$ {\spms} with 
minimum rank at least $\frac{n}{32}$. For generalized sign pattern
$n \times n$ matrices, the minimum rank can be an arbitrary integer in $[0, n]$.

\item
The closely related problem of computing the maximum
rank of a {\gspm} has a simple polynomial time algorithm;
the maximum rank of a {\gspm} $S$ 
simply equals the size of the largest matching $(r_1, c_1), (r_2, c_2), \ldots $
between rows and columns, such that for each $i$, the entry $S_{r_i c_i}$ is
nonzero. This immediately reduces to finding a maximum matching in a bipartite graph.

\item
Buss, Frandsen and Shallit~\cite{BFS} studied the complexity of a
number of linear algebraic problems on matrices. Given a matrix
with each entry being either a variable or a constant, and a field $\F$, they considered the
complexity of computing the minimum rank and maximum rank (and some variants) 
as we vary over all substitutions of the variables with values from $\F$.
Over the field $\mathbb R$, \cite{BFS} show that the problem of computing
maximum rank lies in $\mathsf{RP}$ (see also Lovasz~\cite{Lovasz}),
and the problem of computing minimum rank is equivalent to deciding the
first order theory of the reals (and is thus $\mathsf{NP}$-hard).

Computing the minimum rank of a {\gspm} can be posed in this framework
by considering a matrix whose entries are variables and $0$'s, with two further 
restrictions:
\begin{enumerate}
\item[(i)] to each variable we associate a sign which forces that variable to be either strictly positive
or strictly negative,
\item[(ii)] more importantly, each variable appears at most once in the matrix.
\end{enumerate}
Indeed, \cite{BFS} also studied condition (ii), and were able to show some improved upper
bounds for some linear algebraic problems with this condition. However, as they noted,
their $\NP$-hardness results did not apply with condition (ii).

\item A fundamental result of Paturi et al~\cite{paturi} exactly characterizes the 
randomized unbounded-error two-party communication complexity of a function
in terms of the minimum rank of an associated {\spm}.
This motivated some beautiful work in complexity theory on lower bounds
for the minimum rank of a {\spm} (a.k.a. ``sign rank")
~\cite{Alonetal, RS, Forster, Linialetal}.

\item The best known lower bound on the minimum rank of an explicit 
$n \times n$ {\spm} is $\Omega(\sqrt{n})$. This
result is due to Forster~\cite{Forster}, who proved this by means of a new
general lower bound on the minimum rank of a {\spm} in terms of
the spectrum of its associated $\{ +1, -1 \}$ matrix.

\item Forster's lower bound was generalized by Linial, Mendelson, Shraibman and Shechtman~\cite{Linialetal}.
They studied various complexity measures of sign matrices, and proved relationships between them.
In particular, they proved an upper bound on the minimum rank of a {\spm} in terms
of its margin complexity, a notion that arises in machine learning.

\item A recent new approach to matrix multiplication, due to Cohn and Umans~\cite{CU}, is based
on finding tensors with low {\em support rank}. The support rank of a zero-nonzero pattern tensor
is the minimum rank amongst all tensors with the given zero-nonzero pattern.
Via an example, they noted that the concept of support rank is already quite intricate
for two-dimensional tensors  (i.e., matrices) by giving an example of a zero-nonzero pattern matrix whose
minimum rank differs wildly from the rank of its associated 0-1 matrix.

Our techniques can also be used to show that computing the support rank of a zero-nonzero matrix is $\NP$ hard.
This answers a question raised by Umans in a recent talk. In fact, the $\NP$ hardness already appears
at support rank $3$.

\item The complexity of a similarly themed problem, that of computing the nonnegative rank of a matrix,
was recently studied by Arora, Ge and Moitra~\cite{AGM,M13}. There it was shown that for any constant $r$,
matrices of non-negative rank $\leq r$ can be recognized in polynomial time. This is in sharp contrast to
our result for minimum rank of a {\spm}, where the $\NP$ hardness appears already at minimum rank $3$.

\end{enumerate}

\subsection{Results}

We now state our main results on the hardness of computing the minimum rank of a {\spm}.

Our first (and simpler) main theorem deals with {\gspms}.
\begin{theorem}
\label{thm:main}
The following problem is equivalent to the existential first order theory of the reals:
Given a {\gspm} $S$, decide whether $\minrank(S) \leq 3$. In particular, this problem is $\NP$ hard.
\end{theorem}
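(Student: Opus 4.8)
The plan is to reformulate ``$\minrank(S) \le 3$'' geometrically and then exploit the known hardness of the stretchability problem. First I would record the factorization picture: a real matrix $A$ has $\rank(A) \le 3$ exactly when $A = U V$ with $U$ having $3$ columns and $V$ having $3$ rows; writing the rows of $U$ as vectors $u_i \in \R^3$ and the columns of $V$ as vectors $v_j \in \R^3$, this gives $A_{ij} = \langle u_i, v_j \rangle$. Hence $\minrank(S) \le 3$ holds if and only if there exist vectors $u_i, v_j \in \R^3$ with $\sgn(\langle u_i, v_j\rangle) = S_{ij}$ for all $i,j$. Thinking of each $u_i$ as a point and each $v_j$ as the normal of a line in the real projective plane, the condition says precisely that we can realize a prescribed pattern of point--line incidences (the $0$ entries) and sidednesses (the $\pm$ entries).

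This reformulation immediately gives one direction of the claimed equivalence: the existence of such $u_i,v_j$ is an existential first-order sentence over $\R$, namely a Boolean combination of polynomial equalities $\langle u_i,v_j\rangle = 0$ and strict inequalities $\pm\langle u_i, v_j\rangle > 0$ in the $3(m+n)$ coordinate variables. Thus the problem ``$\minrank(S)\le 3$'' reduces to the existential theory of the reals, placing it inside that class (and, as noted in item~1 of the introduction, inside $\PSPACE$).

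For the reverse direction (and hence $\NP$-hardness) I would reduce from stretchability of pseudoline arrangements, equivalently realizability of rank-$3$ oriented matroids, which is complete for the existential theory of the reals and $\NP$-hard by Mn\"ev's universality theorem and Shor~\cite{shor}. Given a rank-$3$ chirotope $\chi$ on ground set $\{1,\dots,n\}$, I build a generalized sign pattern matrix $S$ whose rows are indexed by the points $k$ and whose columns are indexed by (enough) pairs $(i,j)$, each standing for the line spanned by points $i$ and $j$. I set the incidence entries $S_{i,(i,j)} = S_{j,(i,j)} = 0$ and set $S_{k,(i,j)} = \chi(i,j,k)$ for $k \ne i,j$. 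In the forward direction, a realization $p_1,\dots,p_n \in \R^3$ of $\chi$ yields the rank-$3$ matrix with $A_{k,(i,j)} = \langle p_k,\, p_i \times p_j\rangle = \det[p_i\,p_j\,p_k]$, whose sign pattern is exactly $S$ since $\det[p_i\,p_j\,p_k]$ has sign $\chi(i,j,k)$; hence $\minrank(S)\le 3$.

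The crux is the backward direction, and this is where I expect the real work to lie. Suppose $\minrank(S)\le 3$, realized by vectors $u_k$ (rows) and $w_{ij}$ (columns). The zero entries force $w_{ij} \perp u_i$ and $w_{ij}\perp u_j$, so provided $u_i, u_j$ are linearly independent we have $w_{ij} = \lambda_{ij}\,(u_i\times u_j)$ for some nonzero scalar $\lambda_{ij}$, and then $\sgn(\langle u_k, w_{ij}\rangle) = \sgn(\lambda_{ij})\cdot \sgn(\det[u_i\,u_j\,u_k])$. To conclude that the $u_k$ realize $\chi$ I must (i) force the $u_i,u_j$ to be pairwise independent (distinct projective points) rather than collapsing, (ii) control the global sign $\sgn(\lambda_{ij})$ of each column, which otherwise flips an entire column's worth of orientations, and (iii) exclude spurious collinearities not recorded by $\chi$. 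These degeneracies are exactly the delicate points the construction must handle; in the generalized setting the explicit $0$ entries let me pin incidences down directly, whereas the all-signs (strict) case of Theorem~\ref{thm:strictmain} needs the extra structural input of Lemma~\ref{lemma:suff} on uniform arrangements. I would discharge (i)--(iii) by including, for each relevant pair, the incidence constraints together with enough of the $\chi(i,j,\cdot)$ data to rigidify the configuration, so that any rank-$3$ realization of $S$ is forced into the general position matching $\chi$.
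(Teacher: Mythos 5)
Your overall strategy (factor $A=UV$, read rows as points and columns as lines, reduce from Mn\"ev--Shor) matches the paper's in spirit, but your actual reduction is a different one --- rows indexed by the points of a chirotope, columns by pairs of points --- and its correctness is not established. The backward direction, which you yourself flag as ``where the real work lies,'' is left as a promise to ``rigidify the configuration,'' and that is precisely the step that needs a proof. Concretely, suppose $\minrank(S)\le 3$ with realization $u_k, w_{ij}$. Issue (i) does in fact resolve itself: for a simple rank-$3$ chirotope, every pair $i,j$ admits a $k$ with $\chi(i,j,k)\neq 0$, and then column $(i,k)$ of $S$ has a $0$ in row $i$ but a nonzero entry in row $j$, so rows $i$ and $j$ of any realizing matrix cannot be proportional, forcing $u_i,u_j$ independent (and (iii) then follows, since $\langle u_k,w_{ij}\rangle=\lambda_{ij}\det[u_i\,u_j\,u_k]$ cannot be nonzero at a spurious collinearity). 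But what remains is exactly your issue (ii): you only obtain $\sgn(\det[u_i\,u_j\,u_k])=\epsilon_{ij}\,\chi(i,j,k)$ for unknown signs $\epsilon_{ij}\in\{\pm1\}$ depending on the column. The points $u_k$ therefore realize not $\chi$ but a column-rescaled variant of it, and a priori such a variant could be realizable while $\chi$ is not; your proposal contains no argument ruling this out, and ``including enough of the $\chi(i,j,\cdot)$ data'' is not a fix --- the construction as described already includes all of it.

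The gap is closable, but by an argument you did not supply: since both $\sgn\det$ and $\chi$ are alternating, comparing the three columns through any triple with $\chi(i,j,k)\neq 0$ forces $\epsilon_{ij}=\epsilon_{ik}=\epsilon_{jk}$; one then shows that the graph on pairs with these edges is connected (using simplicity and rank $3$), so all $\epsilon_{ij}$ equal a single $\epsilon$, and the case $\epsilon=-1$ is absorbed by composing the realization with a reflection, since $-\chi$ is realizable iff $\chi$ is. It is instructive to compare with the paper's reduction (Lemma~\ref{lemma:main}), which is engineered so that no such sign bookkeeping ever arises: there the rows of $S$ are \emph{all} covectors of the pseudoline arrangement (one per cell --- vertices, edges, and faces --- not one per point of a configuration), the columns are the pseudolines together with one augmented all-$+$ column, and in the backward direction one normalizes by that column, reads off a line arrangement from the factorization, and concludes equivalence in a single step from the Topological Representation Theorem (Theorem~\ref{thm:trt}); reorientation is built into the notion of isomorphism, so the per-column ambiguity never has to be confronted. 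Your pair-indexed construction can be pushed through, but as written the hardness direction is incomplete at its central point.
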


This has the following immediate corollary, which rules out additive $O(1)$ approximations
to the minimum rank of a {\gspm}, unless $\PTIME = \NP$.

\begin{cor}
For every integer $k \geq 1$, following problem is equivalent to the existential first order theory of the reals:
Given  a {\gspm} $S$ and an integer $r$, distinguish between the case that $\minrank(S) \leq r$ and
the case that $\minrank(S) \geq r + k$.

In particular, this problem is $\NP$ hard.
\end{cor}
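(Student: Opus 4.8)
The plan is to reduce from the problem of \lref[Theorem]{thm:main}, amplifying its unit gap (between minimum rank $\leq 3$ and minimum rank $\geq 4$) into a gap of size $k$ by a block-diagonal construction. The one structural fact I need is that minimum rank is \emph{additive} under direct sums of generalized sign pattern matrices, and the rest is bookkeeping.

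First I would establish the additivity lemma: for generalized sign pattern matrices $S$ and $T$, let $S \oplus T$ denote the block-diagonal generalized sign pattern matrix with $S$ in the top-left block, $T$ in the bottom-right block, and $0$ entries filling both off-diagonal blocks; then $\minrank(S \oplus T) = \minrank(S) + \minrank(T)$. The inequality ``$\leq$'' is immediate: taking optimal real realizations $A$ of $S$ and $B$ of $T$, the block-diagonal matrix with blocks $A$ and $B$ has sign pattern exactly $S \oplus T$ and rank $\rank(A) + \rank(B)$. The inequality ``$\geq$'' is where the generalized (zero-allowing) setting is essential: any real $C$ with $\sgn(C) = S \oplus T$ is forced to be block-diagonal, since the off-diagonal blocks have sign pattern $0$ and hence vanish, so $\rank(C) = \rank(C_{11}) + \rank(C_{22}) \geq \minrank(S) + \minrank(T)$. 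Iterating gives $\minrank(\bigoplus_{i=1}^{k} S) = k \cdot \minrank(S)$.

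With the lemma in hand I would carry out the reduction. Given an instance $S$ of the problem in \lref[Theorem]{thm:main}, output the pair $(S^{(k)}, r)$ where $S^{(k)} = \bigoplus_{i=1}^{k} S$ and $r = 3k$; since $k$ is a fixed constant, $S^{(k)}$ has size polynomial in that of $S$. By additivity, if $\minrank(S) \leq 3$ then $\minrank(S^{(k)}) \leq 3k = r$, whereas if $\minrank(S) \geq 4$ then $\minrank(S^{(k)}) \geq 4k = r + k$. Thus any procedure distinguishing $\minrank(S^{(k)}) \leq r$ from $\minrank(S^{(k)}) \geq r + k$ decides whether $\minrank(S) \leq 3$, yielding a polynomial-time reduction from the ETR-equivalent (hence $\NP$-hard) problem of \lref[Theorem]{thm:main}, and so the gap problem is $\NP$-hard.

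To upgrade ``$\NP$-hard'' to the full equivalence with the existential first-order theory of the reals, I would also note the reverse reduction: as recalled in item~1 of the introduction, deciding $\minrank(S^{(k)}) \leq r$ is expressible in ETR, and under the promise this single test correctly separates the two cases, so the gap problem reduces to ETR. The only delicate point in the argument is the ``$\geq$'' direction of the additivity lemma, which crucially exploits that generalized sign patterns can force off-diagonal entries to be exactly $0$; for ordinary $\{+,-\}$ patterns this block-diagonal argument fails, consistent with the corollary being stated only for generalized sign pattern matrices.
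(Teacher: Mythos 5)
Your proof is correct and is essentially the argument the paper leaves implicit in calling this an ``immediate'' corollary of Theorem~\ref{thm:main}: minimum rank is additive under block-diagonal direct sums of generalized sign pattern matrices (the $0$ blocks force any realization to be block-diagonal), so $k$ copies amplify the $3$-versus-$4$ gap to $3k$ versus $4k$, while membership in the existential theory of the reals follows from the formulation recalled in the introduction. Your closing observation that the additivity step fails for $\{+,-\}$ patterns also matches the paper's own caveat after Theorem~\ref{thm:strictmain}, where the analogous corollary is explicitly not obtained.
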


Our second main theorem concerns {\spms}.

\begin{theorem}
\label{thm:strictmain}
The following problem is $\NP$ hard: Given a {\spm} $S$,
decide whether $\minrank(S) \leq 3$.
\end{theorem}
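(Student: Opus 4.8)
The plan is to reduce from the \emph{stretchability problem for simple (uniform) pseudoline arrangements}, which is $\NP$-hard (indeed equivalent to the existential theory of the reals, by Mn\"ev's universality theorem and Shor's simplification). The bridge is the following geometric reading of minimum rank, which also underlies Theorem~\ref{thm:main}: a {\spm} $S \in \{+,-\}^{m\times n}$ satisfies $\minrank(S)\le 3$ iff there exist nonzero vectors $u_1,\dots,u_m$ (``points'') and $v_1,\dots,v_n$ (``oriented lines'') in $\R^3$ with $\sgn\langle u_i,v_j\rangle = S_{ij}$ for all $i,j$. Projectively, the $v_j$ are oriented lines in $\R P^2$ and $S_{ij}$ records which open side of line $j$ the point $u_i$ lies on; crucially, since $S$ has no zero entries, no point is allowed to lie \emph{on} any line. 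Thus $\minrank(S)\le 3$ asks exactly whether a prescribed ``point/side'' pattern is realizable by a genuine straight-line arrangement.

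For the generalized case (Theorem~\ref{thm:main}) one may place a point exactly at the intersection of two lines using two $0$ entries, which pins that point to $v_j\times v_k$ and lets the remaining signs read off the chirotope directly; this is the clean Shor-style reduction. The difficulty here is that a {\spm} has no zeros, so incidences cannot be forced, and a single representative point in a cell carries no information about the orientation of the three lines bounding that cell. My encoding sidesteps this by recording whole \emph{collections of cells} rather than individual incidences. Given a simple pseudoline arrangement $\mathcal A$ on pseudolines $1,\dots,n$, I build $S_{\mathcal A}$ with one column per pseudoline and one row for each cell (tope) of $\mathcal A$, the row being the sign vector of that cell. Since a simple arrangement has only $O(n^2)$ cells and these are computable from the combinatorial description, $S_{\mathcal A}$ has polynomial size. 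The forward direction is immediate: if $\mathcal A$ is stretched by lines $v_1,\dots,v_n$, then every cell of $\mathcal A$ is a nonempty face of the real arrangement, and choosing a point in each cell realizes $S_{\mathcal A}$, so $\minrank(S_{\mathcal A})\le 3$.

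The crux is the converse, and this is where the structural input Lemma~\ref{lemma:suff} enters. Suppose $\minrank(S_{\mathcal A})\le 3$, witnessed by lines $v_1,\dots,v_n$ and points $u_t$; let $\mathcal M$ be the (realizable) rank-$\le 3$ oriented matroid of the $v_l$. Every row of $S_{\mathcal A}$ has all entries nonzero, so each $u_t$ certifies that the corresponding sign vector is a \emph{tope} of $\mathcal M$; hence every tope of $\mathcal A$ is a tope of $\mathcal M$. Now I use that among all rank-$\le 3$ oriented matroids on $n$ elements the number of topes is maximized \emph{exactly} by the uniform ones, at a common value $N(n)=\Theta(n^2)$ (equivalently, a simple arrangement has the most cells). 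Since $\mathcal A$ is uniform it already has $N(n)$ topes, so $\mathcal M$ has at least $N(n)$ topes and is therefore itself uniform with precisely the topes of $\mathcal A$. Because an oriented matroid is determined by its set of topes, $\mathcal M=\mathcal A$, so $\mathcal A$ is realizable, i.e.\ stretchable. This gives $\minrank(S_{\mathcal A})\le 3 \iff \mathcal A$ is stretchable, and the $\NP$-hardness of stretchability for simple arrangements completes the reduction.

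The main obstacle to watch is exactly this converse step: one must guarantee that realizing the sign pattern forces the witnessing lines into an arrangement combinatorially equal to $\mathcal A$, even though no incidence can be imposed. The cell-maximality argument above is what makes this work, and verifying its hypotheses — that the $v_l$ are forced to be distinct and nonzero, that $\mathcal M$ genuinely has rank $3$ (forced by the need for $\Theta(n^2)$ topes), and that uniform oriented matroids are the unique maximizers of the tope count — is the technical heart of the proof and the content I would expect to isolate as Lemma~\ref{lemma:suff}.
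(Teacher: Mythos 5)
Your proposal is correct and follows essentially the same route as the paper: reduce from stretchability of \emph{uniform} pseudoline arrangements, take $S$ to be the matrix whose rows are the $2$-cell covectors (topes), get the forward direction by choosing a point in each cell, and get the converse by noting that the tope set of the realized line arrangement contains---and hence, by the maximal possible cell count, equals---$\mathcal{C}_2(\mathcal{P})$, after which the tope set determines the arrangement up to equivalence. The only real difference is one of packaging: where you invoke general oriented-matroid facts (uniform oriented matroids uniquely maximize the tope count, and an oriented matroid is determined by its topes), the paper proves exactly the needed special case as Lemma~\ref{lemma:suff}, reconstructing $\mathcal{C}_0$ and $\mathcal{C}_1$ explicitly from $\mathcal{C}_2$ for uniform arrangements and then applying Theorem~\ref{thm:trt}.
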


Here we can only show $\NP$ hardness. We also do not get the analogous corollary:
we do not even know how to rule out (conditional on $\PTIME \neq \NP$)
the existence of a polynomial time algorithm that computes 
the minimum rank of a {\spm} within an additive $2$.

Finally, we show that computing the minimum rank of a {\gspm}
is easy when the minimum rank is $\leq 2$.

\begin{theorem}
\label{thm:algo}
For every $r \in \{0, 1, 2\}$, the following problem can be solved in polynomial time:
Given a {\gspm} $S$, decide whether $\minrank(S) \leq r$.
\end{theorem}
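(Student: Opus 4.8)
The plan is to treat the three values of $r$ separately, with $r=0,1$ being routine and $r=2$ carrying all the content.

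For $r=0$ we simply note that $\minrank(S)\le 0$ holds iff the only candidate matrix is $A=0$, i.e. iff $S$ is the all-zeros pattern, which is checkable by inspection. For $r=1$, a rank-$\le 1$ matrix factors as $A=uv^{\top}$, so $\sgn(A_{ij})=\sgn(u_i)\sgn(v_j)$. Hence I would first check that the support of $S$ (its set of nonzero entries) is a combinatorial rectangle $R\times C$: any two nonzero entries $S_{ij},S_{i'j'}$ must force $S_{ij'}$ and $S_{i'j}$ to be nonzero as well, since rows and columns outside $R\times C$ are realized by the zero vector. Then, on $R\times C$, I would verify the rank-one sign condition that every $2\times 2$ subpattern satisfies $S_{ij}S_{i'j'}S_{ij'}S_{i'j}=+$; this is exactly the condition that the $\pm$ values split as an outer product $\sgn(u_i)\sgn(v_j)$, which one confirms by fixing a reference row and propagating signs. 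Both checks run in polynomial time.

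The heart of the theorem is $r=2$. Here I would pass to the geometric reformulation used throughout the paper: $\minrank(S)\le 2$ iff there are vectors $u_1,\dots,u_m\in\R^2$ (for rows) and $b_1,\dots,b_n\in\R^2$ (for columns) with $\sgn\langle u_i,b_j\rangle=S_{ij}$. Zero rows and zero columns of $S$ are realized by the zero vector and may be deleted, while an interior zero $S_{ij}=0$ with both vectors nonzero becomes the orthogonality constraint $\langle u_i,b_j\rangle=0$. After normalizing the nonzero vectors to the unit circle, each column $b_j$ determines an open semicircle $I_j$ of directions within angle $\pi/2$ of $b_j$, and the problem becomes: place each row-direction $\alpha_i$ on the circle so that $\alpha_i\in I_j$ when $S_{ij}=+$, $\alpha_i$ lies in the antipodal semicircle when $S_{ij}=-$, and $\alpha_i$ lies on the boundary of $I_j$ when $S_{ij}=0$. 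Recording each direction together with its antipode, the entire sign pattern is then determined by a single antipodally-symmetric cyclic ordering of the $O(m+n)$ special points, and $\minrank(S)\le 2$ iff such a consistent cyclic ordering exists.

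To decide the existence of a consistent cyclic ordering in polynomial time, I would extract from the rows of $S$ the forced relative positions of the columns: a pair of columns $j,k$ partitions the rows according to the pattern $(S_{\cdot j},S_{\cdot k})$, and which patterns actually occur pins down the relative rotation of $I_j$ and $I_k$ (whether their boundaries interleave, coincide, or are nested). I would then assemble these pairwise constraints into a global cyclic order of the columns, assign each row to the cell it must occupy, and verify all constraints. The main obstacle is precisely this assembly step: pairwise data need not linearize, and cyclic-betweenness constraint satisfaction is hard in general. What rescues the argument is the rigidity of rank $2$: any collection of sign vectors that is locally consistent (on pairs and triples) with a central line arrangement is in fact realizable by genuine vectors in $\R^2$, since every oriented matroid of rank at most $2$ is realizable over $\R$. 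I therefore expect the crux of the proof to be (i) isolating the correct finite set of local consistency conditions and showing they can be checked and stitched into a cyclic order in polynomial time, and (ii) converting a consistent cyclic order back into actual real vectors, including the correct placement of the boundary (orthogonality) cases arising from interior zeros.
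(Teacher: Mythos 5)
Your $r=0$ and $r=1$ cases are correct: the all-zero check, the rectangle-support condition, and positivity of all $2\times 2$ sign products together do characterize $\minrank \leq 1$, and all are checkable in polynomial time. The genuine gap is in the $r=2$ case, which is the entire content of the theorem, and it sits exactly where you flag it yourself: items (i) and (ii) of your last sentence are deferred, not proved, so what you have is a reformulation plus a research plan. The reformulation (rows and columns as vectors in $\R^2$, equivalently an antipodally symmetric circular order of directions) is sound and is the same underlying picture as the paper's, but the algorithmic core is missing. Worse, the specific assembly strategy you sketch cannot work as stated: for two \emph{distinct} central lines in $\R^2$, all four open sign patterns $(\pm,\pm)$ are always realized and the two antipodal boundary pairs always interleave on the circle, so the pairwise column data you propose to extract pins down essentially nothing --- the constraints tying rows to columns are inherently global. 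As you note, cyclic-betweenness constraint satisfaction is NP-hard in general, and appealing to the fact that every oriented matroid of rank $\leq 2$ is realizable does not close the gap: the matrix $S$ supplies only a \emph{partial} set of sign vectors (signs of specific point/line pairs), and deciding whether this partial data extends to a rank-$2$ oriented matroid is precisely the decision problem you need an algorithm for.

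For comparison, the paper closes exactly this local-to-global gap by destroying the cyclic structure rather than searching over it. After flipping rows so that some column is all $+$ and closing the columns under negation, one observes that any realizing $2$-dimensional subspace can be taken to contain the all-ones vector, and in such a subspace the negative supports $T^-$ of the realizing vectors grow \emph{monotonically} as one sweeps through the pencil; hence $\minrank(S)\leq 2$ if and only if the sets $T^-(c_j)$, together with their complements, form a single nested chain from $\emptyset$ to $[m]$ (the $2$-chain property, Lemma~\ref{lemma:cp_eq_mr2}). This chain condition is verified greedily in polynomial time (Lemma~\ref{lemma:cp_in_poly}); the uniqueness of the minimal next set in the greedy step is what rules out the non-linearizable configurations you worry about. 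Zeros are then handled not as orthogonality constraints inside the geometric picture but by a separate preprocessing reduction: delete all-zero rows and columns, check that rows indexed by the zero set of any column with $\geq 2$ zeros are identical and merge them, and split each column with exactly one zero into a $+$ copy and a $-$ copy; each step either preserves $\minrank$ or certifies $\minrank(S)>2$. To complete your own route you would need to prove an analogue of Lemma~\ref{lemma:cp_eq_mr2}, i.e., an explicit finite local condition equivalent to global realizability; as written, the proposal stops short of that.
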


Some remarks:
\begin{enumerate}
\item Our $\NP$ hardness results use a beautiful
and highly nontrivial hardness result for a basic problem 
in computational geometry: {\em the stretchability of pseudoline arrangements}. This
problem (to be defined later) was shown to be $\NP$ hard
by Shor~\cite{shor}, based on some ingenious projective geometry constructions.
Shor also noted that hardness for the stretchability problem for pseudoline arrangements
could be derived from Mnev's universality theorem for oriented matroids \cite{mnev},
by observing that the steps of Mnev's proof can be executed in polynomial time.

\item The fact that detecting if $\minrank(S) \leq 2$ is
easy and detecting if $\minrank(S) \leq 3$ is hard is analogous
to the fact that the statement $\minrank_{\F}(S) \leq 2$ does not depend
on the field $\F$, while the statement $\minrank_{\F}(S) \leq 3$ does
depend on the field. In fact, there are some common themes in our proof:
the field-dependence of $\minrank$ as proved in~\cite{KR-minrank}, depends on a projective
geometry construction known as the ``von Staudt algebra of throws"; this
same construction appears as a crucial tool in the proof of Mnev's universality
theorem, and thus plays a role in our hardness result.

Note, however, that for {\spm} $S$, $\minrank_{\F}(S)$
is independent of the field. Nevertheless, Theorem~\ref{thm:strictmain}
shows that it is still hard to compute the minimum rank of a {\spm}. 
The proof of Theorem~\ref{thm:strictmain} involves some further ideas beyond
what goes into the proof of Theorem~\ref{thm:main}.

\end{enumerate}

\subsection{Proof outline}

We now give an overview of the proofs.

Let us work on the real projective plane $\mathbb {RP}^2$.
A pseudoline is a non-self-intersecting continuous curve in $\mathbb {RP}^2$
that intersects the line at infinity in exactly 1 point. A pseudoline arrangement
is a collection of pseudolines such that any two pseudolines in the collection
intersect in exactly $1$ point. Two pseudoline arrangements are called 
{\em equivalent} if there is a homeomorphism of $\mathbb {RP}^2$ which induces a bijection
on the pseudolines in the pseudoline arrangements. A pseudoline arrangement is
called {\em stretchable} if it is equivalent to a pseudoline arrangement where each 
pseudoline is in fact a straight line in $\mathbb R^2$.

Our hardness results for computing the  minimum rank of {\spm} and {\gspm}
are based on the following hardness results for determining stretchability of pseudoline
arrangements:
\begin{theorem}[{\cite{mnev, shor}}]
The following problem is equivalent to the existential first order theory of the reals:
Given a pseudoline arrangement, decide whether it is stretchable.
\end{theorem}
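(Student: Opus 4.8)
The plan is to prove the equivalence by exhibiting polynomial-time reductions in both directions: stretchability reduces to the existential first order theory of the reals (membership), and the existential first order theory of the reals reduces to stretchability (hardness). The second reduction is the substantial one; it is the content of Mnev's universality theorem, with the polynomial-time bookkeeping supplied by Shor.

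For membership, I would first record the combinatorial data of a pseudoline arrangement as a rank-$3$ oriented matroid: by the Folkman--Lawrence topological representation theorem, every pseudoline arrangement in $\mathbb{RP}^2$ corresponds to such an oriented matroid, and stretchability of the arrangement is exactly realizability of this oriented matroid by straight lines. A straight-line realization assigns to each line $\ell_i$ a homogeneous coefficient vector $v_i = (a_i, b_i, c_i) \in \mathbb{R}^3$, and the combinatorial type is determined by the signs of the $3 \times 3$ determinants $\det(v_i, v_j, v_k)$, which vanish precisely when the three lines are concurrent. Hence I would introduce the $3n$ real variables $a_i, b_i, c_i$ and write down, for each triple, the equality $\det(v_i, v_j, v_k) = 0$ or the strict inequality $\det(v_i, v_j, v_k) > 0$ or $\det(v_i, v_j, v_k) < 0$ dictated by the arrangement. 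This is a system of polynomial (in)equalities of size polynomial in the arrangement, so the arrangement is stretchable if and only if this existential sentence over $\mathbb{R}$ is true; this is the desired reduction to the existential first order theory of the reals.

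For hardness, I would start from an arbitrary instance of the existential theory of the reals, namely a system of polynomial equalities and inequalities, and build a pseudoline arrangement whose straight-line realizations correspond to the real solutions of the system. The key tool is the von Staudt ``algebra of throws'': fixing a line with marked points playing the roles of $0$ and $1$, one can, using only incidences of points and lines, construct a point representing $x + y$ and a point representing $x \cdot y$ from points representing $x$ and $y$. Thus each arithmetic operation in the polynomial system becomes a small incidence gadget, and chaining the gadgets encodes the evaluation of each polynomial; imposing that the output point coincides with the point for $0$ (for an equality) or lies on the prescribed side (for an inequality) forces the encoded reals to satisfy the system. Translating the resulting point-line incidence configuration into a pseudoline arrangement by projective duality then yields an arrangement that is stretchable precisely when the original system has a real solution.

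The main obstacle is twofold, and both parts live in the hardness direction. First, one must guarantee that the combinatorial object produced is a genuine pseudoline arrangement, i.e.\ that it admits a topological (pseudoline) realization even when the polynomial system is unsatisfiable and hence no straight-line realization exists; this is exactly what makes stretchability a nontrivial decision problem, since the pseudoline arrangement always exists but the stretching may fail. Second, following Shor, one must verify that the arithmetic gadgets compose without the number of points and lines, or the requisite coordinate precision, blowing up super-polynomially, so that the whole construction runs in polynomial time. Mnev's proof achieves a strong form of the first point (a stable equivalence between the realization space and the solution set of the system), and Shor's projective-geometry constructions make the reduction efficient; assembling these two ingredients with the membership direction gives the claimed equivalence.
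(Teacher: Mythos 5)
The paper does not prove this statement at all: it is quoted as a black-box result of Mnev and Shor, and serves only as the starting point for the reductions in Theorems~\ref{thm:main} and~\ref{thm:strictmain}. So there is no internal proof to compare yours against, and the honest assessment is that your proposal is a road map of the cited proof rather than a proof. Your membership direction is essentially complete modulo routine details: representing the arrangement by its rank-$3$ oriented matroid (via Folkman--Lawrence), introducing coordinates $v_i=(a_i,b_i,c_i)$, and imposing the prescribed signs on the determinants $\det(v_i,v_j,v_k)$ is the standard encoding. Two small points you should add: (i) for each pair $i\neq j$ you need a condition forcing $v_i$ and $v_j$ to be linearly independent (a disjunction of $2\times 2$ minor inequalities), since in highly non-uniform arrangements the triple conditions alone admit degenerate solutions with coincident lines; (ii) realizability up to reorientation coincides with exact realizability because negating $v_i$ flips precisely the determinants involving $i$, so the fixed sign assignment loses no generality.

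The hardness direction is where the entire substance of the theorem lives, and there your text names the ingredients (von Staudt gadgets, projective duality, Mnev universality, Shor's efficiency bookkeeping) without carrying any of them out. Moreover, beyond the two obstacles you identify, there is a third that your sketch omits and that is the crux of Shor's paper: a reduction must output a \emph{single, well-defined} combinatorial arrangement, so the order type of all constructed points and lines must be identical for every solution of the polynomial system, and must be specifiable even when no solution exists. This forces a preliminary normalization of the existential-theory-of-the-reals instance (solutions bounded and totally ordered, arithmetic decomposed into steps with controlled intermediate values), and proving that this normalization is polynomial-time and satisfiability-preserving is a large fraction of the work. As written, your argument establishes the theorem only insofar as it cites the same two papers that the statement itself cites; that is a reasonable thing to do (it is what the paper does), but it should be presented as a citation with a proof sketch, not as a proof.
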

A uniform pseudoline arrangement is a pseudoline arrangement where no three pseudolines meet at a point.
\begin{theorem}[\cite{shor}]
The following problem is $\NP$ hard: 
Given a uniform pseudoline arrangement, decide whether it is stretchable.
\end{theorem}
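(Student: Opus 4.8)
The plan is to reduce a known $\NP$-hard problem to the stretchability question by building, out of small pseudoline gadgets, an arrangement whose straight-line realizations are forced to encode the arithmetic of the source problem. The starting point is the standard dictionary between arrangements and oriented matroids: a uniform pseudoline arrangement in $\mathbb{RP}^2$ is precisely (the combinatorial data of) a uniform rank-$3$ oriented matroid, recorded by the chirotope, i.e.\ the signs of all triples of lines; stretchability is exactly realizability of this oriented matroid over $\R$. Thus it suffices to exhibit a polynomial-time map from instances of an $\NP$-hard problem to uniform rank-$3$ oriented matroids that are realizable if and only if the instance is a yes-instance.

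First I would construct the arithmetic gadgets via the von Staudt ``algebra of throws.'' Fixing a line $\ell$ carrying distinguished points playing the roles of $0$, $1$, and the point at infinity, one can lay out auxiliary points and lines so that the position of a marked point on $\ell$, in any straight-line realization, is forced to equal $a+b$ (respectively $a\cdot b$) of the positions of two input points. These addition and multiplication gadgets have constant size, and chaining them lets one realize, as a single arrangement, any straight-line program built from $+$ and $\times$ gates.

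Next I would choose the source problem so that satisfiability is captured by the solvability of such an arithmetic system over $\R$. For instance, I would encode $3$-SAT by forcing each variable point $x$ to satisfy a Boolean constraint (a gadget realizable iff $x\in\{0,1\}$, via $x\cdot x = x$) and each clause to evaluate to a ``true'' value, so that the whole arrangement is stretchable iff the formula is satisfiable. Since each gadget is of bounded size and there are polynomially many of them, the arrangement has polynomial size and is computable in polynomial time, giving the desired reduction; note that for $\NP$-hardness only this reduction direction is needed.

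The hard part will be enforcing uniformity. The von Staudt gadgets are built from concurrences --- triples of lines meeting at a common point --- which are exactly what a uniform arrangement forbids, so the earlier stretchability-equals-ETR theorem for general arrangements does not apply directly. The crux of the argument is to replace every intended concurrence by a small ``virtual triple point'': a triangle of pairwise crossings whose order type (betweenness and cyclic orderings along the relevant lines) forces the three intersection points to lie in a prescribed tiny region, so that in any realization the corresponding arithmetic identity holds up to a controlled error. One must then propagate these errors through the whole chain of gadgets and verify that the tolerance stays small enough that yes-instances remain realizable while no-instances remain non-realizable, so that the combinatorial (order-type) data alone decides satisfiability. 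Keeping this perturbation analysis consistent across a long composition of gadgets --- rather than the arithmetic-gadget construction itself --- is the main obstacle.
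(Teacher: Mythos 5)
First, a point of orientation: the paper you are working from does not prove this statement at all --- it is imported as a black box from Shor~\cite{shor}, and the paper's contribution lies elsewhere (the reduction \emph{from} stretchability to minimum rank). So your proposal is not competing with a proof in the paper; it is an attempt to reconstruct Shor's theorem itself. At the level of strategy, your sketch does follow the known route: identify uniform arrangements with uniform rank-$3$ oriented matroids via the topological representation theorem, encode arithmetic with von Staudt addition and multiplication gadgets, and reduce an $\NP$-hard arithmetic-feasibility problem (e.g.\ 3-SAT written as $x\cdot x = x$ plus clause equations) to realizability. This much is the standard Mnev--Shor template~\cite{mnev, shor}.

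However, the step you yourself flag as ``the hard part'' is not a technical verification you can defer --- it is the entire content of the theorem, and as you have stated it, it is false. You propose replacing each concurrence by a ``virtual triple point'': a small triangle whose order type ``forces the three intersection points to lie in a prescribed tiny region,'' so that the arithmetic identities hold ``up to a controlled error.'' No order type can force any metric statement of this kind: order types are invariant under projective transformations, which can blow any triangle of crossings up to arbitrary size. Consequently, in a purported realization of a no-instance, nothing bounds the ``error'' at each virtual triple point, and the proposed propagation-of-tolerances argument has nothing to propagate. Exactness in the von Staudt gadgets comes from incidences, which uniformity forbids; once you relax them to open (strict-inequality) conditions, you must prove correctness of the reduction by a purely combinatorial/projective argument that works for \emph{every} realization, with no smallness assumption available. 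Engineering gadgets and an encoding for which this can be done --- so that yes-instances are realizable with slack while no-instances are combinatorially infeasible regardless of how the triangles stretch --- is precisely Shor's ``ingenious projective geometry constructions'' that the paper alludes to, and is also why the uniform case required a separate argument rather than a perturbation of the non-uniform one. As it stands, your proposal reduces to the statement ``one must then make the error analysis work,'' which is the theorem, not a proof of it.
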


To prove Theorem~\ref{thm:main}, we give a simple reduction of the stretchability problem for pseudoline arrangements to
the problem of determining whether the minimum rank of a {\gspm}
is $\leq 3$. The main idea is the following: given a collection of points $(x_1, y_1), \ldots, (x_n, y_n) \in \R^2$,
and a collection of lines $L_1 \equiv \{ a_1 X + b_1 Y + c_1 = 0 \}, \ldots, L_m \equiv \{ a_m X + b_m Y + c_m = 0 \}$,
consider the matrices $P_{n\times 3}$ whose row $i$ is $(x_i,y_i,1)$ and $L_{3\times m}$ whose $j^{th}$ column is $(a_j,b_j,c_j)$. Note that the matrix $A = P \cdot L$ has rank $\leq 3$,
and its sign pattern matrix $\sgn(A)$ precisely
captures the relative position information between the points and the lines: it tells us which points lie on which sides of each line,
and which points lie on which lines.
The reverse is also true: given a {\gspm} $S$, it has minimum rank $\leq 3$ if and only if one can realize
a set of points and a set of lines with certain prescribed relative positions (depending on $S$).
Finally, we show how the stretchability problem for pseudoline arrangements (which is a problem of realizing a set of lines with
prescribed incidence-like properties) can be reduced to the problem of realizing point-line configurations with prescribed relative positions. This is done using some basic oriented matroid theory.

To prove Theorem~\ref{thm:strictmain}, we need to end up with a sign pattern matrix without any $0$'s.
Following the above ideas, we need to reduce some $\NP$-hard problem to the problem of realizing a point-line configuration with prescribed relative positions of points and lines, {\em where no point of the configuration lies on any line of the configuration}. It turns out that the simple reduction
described above for Theorem~\ref{thm:main} does not produce such point-line configurations. Instead, we give a slightly more involved reduction which starts from the stretchability problem for {\em uniform} pseudoline arrangements, and uses some properties of their associated oriented matroids. This leads to Theorem~\ref{thm:strictmain}.

We now outline our algorithmic results.
We first give a polynomial time algorithm to check if the minimum rank of a {\spm} is at most $2$. Given a $m\times n$  {\spm}, we construct a family of $O(n)$ subsets of $[m]$, and  show that this family satisfies a ``chain'' like property if and only if the given {\spm} has minimum rank at most $2$. Checking the ``chain'' property turns out to be in polynomial time. We then extend this result to {\gspm} by reducing the problem for {\gspm} to the problem for {\spm}.


\section{Hardness results}

Our reduction from the problem of stretchability of pseudoline arrangements to the minimum rank problem will involve
the notion of an {\em oriented matroid}.
We begin with some preliminaries on pseudoline arrangements and oriented matroids.

\subsection{Preliminaries}

We will use some basic properties of oriented matroids. Oriented matroids are
abstract combinatorial objects closely related to matroids. Here we will use an axiomatization of oriented matroids based on ``covectors" (this axiomatization  is not related to any of the standard axiomatizations of matroids). See \cite{handbook_dcg_6, orientedmatroid} for an introduction to the area. 

\begin{df}
{\bf Oriented matroid:} An oriented matroid is a pair $\mathcal{M} = (E,\mathcal{L})$ where $E$ is a ground set, and $\mathcal{L} \subseteq \{0,-,+\}^{E}$,
such that $\mathcal{L}$ satisfies the axioms (CV0)-(CV4) given in \cite[p. 118]{handbook_dcg_6}. $\mathcal L$ is called the set of covectors of $\mathcal M$.
\end{df}

\begin{df}
{\bf Reorientation of a Matroid:} Let $\mathcal{M} = (E,\mathcal{L})$ be an oriented matroid. The reorientation of $\mathcal{M}$ with respect to set $A\subseteq E$ is $\mathcal{M}' = (E,\mathcal{L}')$, such that for each $v\in \mathcal{L}$, we have a corresponding $v' \in \mathcal{L}'$ with the property that $v'_i = -v_i$ if $i\in A$ and $v'_i = v_i$ otherwise.
\end{df}

\begin{df}
{\bf Isomorphism of Oriented Matroids:} Two oriented matroids $(E_1, \mathcal L_1)$ and $(E_2, \mathcal L_2)$ are called isomorphic if 
$E_1 = E_2$ and there is a reorientation $(E_1, \mathcal L_1')$ of $(E_1, \mathcal L_1)$ such that $\mathcal L_1' = \mathcal L_2$.
\end{df}
To every pseudoline arrangement $\mathcal{P}$, we can associate an oriented matroid $\mathcal{OM}(\mathcal P)$ as follows:

\vspace{20pt}

\begin{figure}[h!]
\label{fig:pa}
\centering
\begin{tikzpicture}[line cap=round,line join=round,x=0.65cm,y=0.65cm]

\draw (0,5.5) node[anchor=north west] {$l_1$};
\draw (1,5)-- (13,1);
\draw [->, blue, >=triangle 45] (1.2,4.95) -- (1.5,5.6);

\draw (0,1.5) node[anchor=north west] {$l_4$};
\draw plot [smooth ] coordinates {(1,1) (5,3) (13,4)};
\draw [->, blue, >=triangle 45] (1.3,1.2) -- (1.5,0.3);

\draw (0,2.5) node[anchor=north west] {$l_3$};
\draw plot [smooth ] coordinates {(1,2) (6,1.5) (11,4.5) (13,5)};
\draw [->, blue, >=triangle 45] (1.3,1.9) -- (1.5,2.7);

\draw (0,3.5) node[anchor=north west] {$l_2$};
\draw plot [smooth ] coordinates {(1,3) (6,5) (13,2)};
\draw [->, blue, >=triangle 45] (1.2,3.1) -- (1,3.9);

\draw[fill] (9.7,3.67) circle(0.1);
\draw [->] (9.7,3.67) -- (10,5.6); \draw (9.2, 6.6) node[anchor=north west] {$(+$ $0$ $0$ $0)$};

\draw[fill] (8,1) circle(0.1); \draw (8,1) node[anchor=north west] {$(- - - +)$};

\draw[fill] (6,5) circle(0.1); 
\draw [->] (6,5) -- (6.5, 6); \draw (5.5,6.8) node[anchor=north west] {$(+$ $0 + -)$};

\end{tikzpicture}
\caption{Pseudoline arrangement and the associated sign patterns of some points. The $+$ side in the orientation is denoted by blue arrow.}
\end{figure}
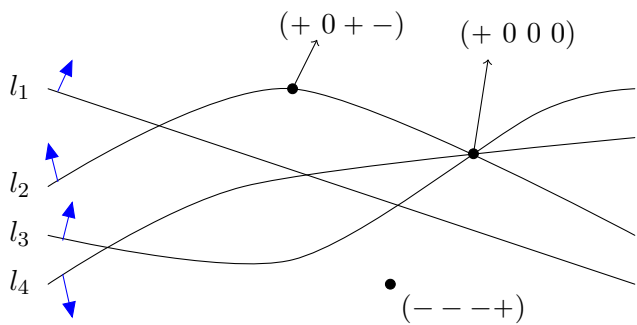
Consider the cell complex obtained by the subdivision of $\mathbb{R}^2$ by the pseudoline arrangement $\mathcal P$. Each pseudoline partitions $\mathbb R^2$ into two regions;
we arbitrarily choose one side to be the $+$ side of that pseudoline and the other side to be $-$ side of that pseudoline.
Let $E$ be the set of pseudolines in $\mathcal P$. For a point $p$ in $\mathbb R^2$, we now get a sign vector $s_p \in \{0, -, + \}^E$:
the $\ell$th coordinate of $s_p$ equals $+/ -/ 0$ depending on whether $p$ is on the $+$ side/$-$ side/on the pseudoline $\ell$.
$s_p$ is called the covector of $p$ (see figure~\ref{fig:pa}). We take $\mathcal L$ to equal the set of all covectors $s_p$, as $p$ varies in $\mathbb{R}^2$, along with
the all $0$s vector. Then $\mathcal M = (E, \mathcal L)$ is an oriented matroid, and is called the oriented matroid
associated with $\mathcal P$, and is denoted $\mathcal{OM}(\mathcal P)$.

Note the oriented matroid depends on the arbitrary choice of $+$ and $-$ sides of the pseudolines.
The oriented matroid $\mathcal{OM}(\mathcal P)$ is only uniquely defined up to isomorphism.

The covectors $s_p$ obtained above can be classified into $3$ categories:
$\mathcal{C}_0(\mathcal{P})$, $\mathcal{C}_1(\mathcal{P})$ and $\mathcal{C}_2(\mathcal{P})$, depending on whether $p$
is in a $0$-cell (point), $1$-cell (curve segment) or $2$-cell (open 2-dimensional region) of the cell complex cut out by
$\mathcal P$ (note that all $p$ in the same cell have the same covector $s_p$). Furthermore, given a covector $s_p$,
it is easy to determine which $\mathcal{C}_i(\mathcal{P})$ it belongs to: if the number of $0$s in $s_p$ is $\geq 2$, then
it belongs to $\mathcal C_0(\mathcal P)$, if the number of $0$s in $s_p$ equals $1$, then it belongs to $\mathcal C_1(\mathcal P)$,
 and otherwise it belongs to $\mathcal C_2(\mathcal P)$.

The equivalence between two pseudoline arrangements can be viewed as a purely combinatorial condition in terms of their oriented matroids.
This is given by the easy direction of Folkman-Lawrence Topological Representation Theorem.

\begin{theorem}
[\cite{folkman-lawrence}, {\bf Topological Representation Theorem}]
\label{thm:trt}
Two pseudoline arrangements are equivalent if and only if their oriented matroids are isomorphic. 
\end{theorem}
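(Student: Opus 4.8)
The plan is to route both directions through the cell complex that a pseudoline arrangement $\mathcal{P}$ cuts out, and to observe that the covector set $\mathcal{L}$ of $\mathcal{OM}(\mathcal{P})$ is nothing more than a combinatorial encoding of the face poset of this complex, labeled by which pseudolines pass through each cell. Concretely, the cells are in bijection with the covectors in $\mathcal{L}$ ($2$-cells with $\mathcal{C}_2(\mathcal{P})$, $1$-cells with $\mathcal{C}_1(\mathcal{P})$, vertices with $\mathcal{C}_0(\mathcal{P})$), and the relation ``cell $c$ lies in the closure of cell $d$'' is recoverable purely from the sign vectors: $s_c \leq s_d$ in the conformal partial order in which $0$ conforms to both $+$ and $-$, while $+$ and $-$ are incomparable. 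Thus from $\mathcal{L}$ alone one reconstructs the face poset of the complex together with the extra data recording, for each face and each $\ell \in E$, whether the face lies on pseudoline $\ell$ (coordinate $\ell$ is $0$) and otherwise on which side. Both implications then reduce to comparing these labeled posets.

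For the forward direction, suppose $\mathcal{P}_1$ and $\mathcal{P}_2$ are equivalent via a homeomorphism $\phi$ of $\mathbb{RP}^2$ carrying the pseudolines of the first bijectively onto those of the second. Then $\phi$ carries the cell complex of $\mathcal{P}_1$ onto that of $\mathcal{P}_2$, inducing an isomorphism of face posets that respects the pseudoline-incidence labeling. The only discrepancy is that $\phi$ need not match the arbitrary $+/-$ side choices: a pseudoline may be sent to its image with the two chosen positive sides swapped. Collecting the set $A \subseteq E$ of pseudolines where this swap occurs and reorienting $\mathcal{OM}(\mathcal{P}_1)$ with respect to $A$ undoes exactly these sign flips, so the reoriented covector set equals $\mathcal{L}_2$, giving an isomorphism of oriented matroids.

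For the reverse direction, suppose $\mathcal{OM}(\mathcal{P}_1)$ and $\mathcal{OM}(\mathcal{P}_2)$ are isomorphic; after reorienting the side choices of $\mathcal{P}_1$ along the witnessing set $A$ (which changes only the arbitrary conventions, not the arrangement), I may assume $\mathcal{L}_1 = \mathcal{L}_2$. By the reconstruction above, the two cell complexes have isomorphic face posets via a label-preserving isomorphism. I would then invoke the standard topological fact that a finite regular CW complex is determined up to homeomorphism by its face poset, and that a poset isomorphism is realized by a cellular homeomorphism built inductively over skeleta: match the vertices, extend over the arcs ($1$-cells), then extend over each $2$-cell using that its closure is a disk whose boundary circle is pinned down by the poset. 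This yields a homeomorphism $\phi$ of $\mathbb{RP}^2$, and since the labels are preserved, $\phi$ carries each pseudoline of $\mathcal{P}_1$ — namely the union of cells whose covector has a $0$ in the corresponding coordinate — onto the matching pseudoline of $\mathcal{P}_2$, so the arrangements are equivalent.

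The forward direction and the poset-reconstruction step are essentially bookkeeping. The crux is the reverse direction's appeal to regular-CW rigidity, and more delicately the requirement that the realizing homeomorphism send pseudolines to pseudolines rather than merely matching cells. Extending over a $2$-cell is possible only because its boundary is a circle (regularity), and one must check that the \emph{cyclic} order in which arcs and vertices appear around each $2$-cell is itself encoded by the covector poset — this uses that we are arranging curves on a surface, so the link of every vertex is a circle, rather than working with an arbitrary complex. I expect this combinatorial-to-topological passage to be the one genuinely nontrivial ingredient, and it is precisely the content of the ``easy direction'' of Folkman--Lawrence that Theorem~\ref{thm:trt} invokes.
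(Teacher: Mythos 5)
The paper does not prove Theorem~\ref{thm:trt} at all: it imports the result from \cite{folkman-lawrence} as a black box (the two-dimensional case of the Topological Representation Theorem), so there is no internal proof to compare yours against. Judged on its own, your plan is the standard route to this statement: read $\mathcal{L}$ as the incidence-labeled face poset of the induced cell complex, transport cells along the homeomorphism and reorient for the forward direction, and rebuild a cellular homeomorphism from a label-preserving poset isomorphism for the converse. The forward direction and the conformal-order reconstruction of the poset are fine in spirit.

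The genuine gap is in the step you yourself flag as the crux: the appeal to finite regular CW rigidity. First, the complex the paper defines is the subdivision of $\mathbb{R}^2$, so it has unbounded $1$- and $2$-cells and is not a finite CW complex at all; the rigidity theorem only applies after compactifying (e.g.\ adjoining the line at infinity to the arrangement, or passing to the double cover by $S^2$). Second, even in the projective setting, regularity genuinely fails for degenerate arrangements that the theorem as stated covers: if all pseudolines pass through one common point (a pencil), each pseudoline minus that point is a single $1$-cell whose closure is a circle, so both endpoints of its attaching interval hit the same vertex; and for $n \le 2$ the closure of a $2$-cell is not a closed disk (for one pseudoline it is all of $\mathbb{RP}^2$). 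Precisely in these cases your inductive extension over skeleta, which needs each $2$-cell's boundary to be an embedded circle in the $1$-skeleton, breaks down. These cases require separate treatment, and even for non-degenerate arrangements the regularity of the complex is itself a claim to be proved, not bookkeeping.

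There is also a definitional mismatch that your argument inherits rather than resolves: equivalence is a homeomorphism of $\mathbb{RP}^2$, which need not carry $\mathbb{R}^2$ to $\mathbb{R}^2$, while the paper's covectors are computed only at affine points. So $\phi$ does not obviously ``carry the cell complex of $\mathcal{P}_1$ onto that of $\mathcal{P}_2$'' in the relevant affine sense: it can move the crossing point of two pseudolines onto the line at infinity, and the affine covector sets of a crossing pair and a ``parallel'' pair have different zero patterns, hence are not related by any reorientation. To make the proof airtight one should state and prove everything projectively (as \cite{folkman-lawrence} does) or insist that the equivalence homeomorphism fix the line at infinity.
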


Thus, a given pseudoline arrangement $\mathcal P$ is stretchable if and only if there is a line arrangement whose oriented matroid can be reoriented to 
equal the oriented matroid of $\mathcal P$.

\paragraph{Representation:} One may wonder in what format pseudoline arrangements are represented.
Since we only care about pseudoline arrangements up to equivalence, we need not precisely describe the 
pseudolines appearing in the arrangement, but can work with some coarser representation. One natural candidate
is to work with ``wiring diagrams" given by polygonal curves. For us, it will be sufficient to represent
a pseudoline arrangement by the oriented matroid associated with it; note that Theorem~\ref{thm:trt} implies
that this data is sufficient to capture equivalence.

For our purposes, it suffices to note that the oriented matroid of the pseudoline arrangement produced
by the Shor's reductions~\cite{shor} can be computed in polynomial time. This can be seen by inspecting the proof.

\subsection{Proof of Theorem~\ref{thm:main}}

We now give our hardness reduction. Given a pseudoline arrangement, we construct a {\gspm} 
such that minimum rank of the constructed {\gspm} is at most 3 if and only if the pseudoline
arrangement we started with was stretchable.

For any $\mathcal{L} \subseteq \{0,-,+\}^n$, let $\mat(\mathcal{L})$ be the {\gspm} whose rows correspond to the sign vectors in $\mathcal{L}$ in arbitrary order.

\begin{lemma}
\label{lemma:main}
Let $\mathcal{P}$ be a pseudoline arrangement. Let $\mathcal {OM}(\mathcal P) = (E, \mathcal L)$ be its oriented matroid.
Let $S$ be a {\gspm} obtained by augmenting a column of all $+$s to $\mat(\mathcal{L})$.
Then $\mathcal{P}$ is stretchable if and only if $\minrank(S) \leq 3$.
\end{lemma}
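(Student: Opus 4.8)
The plan is to use the standard dictionary between rank-$\le 3$ real factorizations and oriented point--line configurations in the projective plane, together with the Topological Representation Theorem (Theorem~\ref{thm:trt}). If $B$ is a real matrix of rank $\le 3$, I factor it as $B = P\cdot L$ with $P\in\R^{n\times 3}$ and $L\in\R^{3\times(m+1)}$, and read each row of $P$ as the homogeneous coordinates of a point and each column of $L$ as the homogeneous coordinates of a line in $\mathbb{RP}^2$; then $\sgn(B)_{ij}$ records which side of line $j$ point $i$ lies on. The extra all-$+$ column of $S$ will single out a line missing all of the points, which I use to pass to an affine chart in which the points become honest points of $\R^2$.

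For the ``only if'' direction, suppose $\mathcal P$ is stretchable, so it is equivalent to a genuine line arrangement $\mathcal Q$; by Theorem~\ref{thm:trt} there is a set $A\subseteq E$ so that reorienting $\mathcal{OM}(\mathcal Q)$ by $A$ yields $(E,\mathcal L)$. Writing the lines of $\mathcal Q$ as $a_jX+b_jY+c_j=0$ and choosing, for each covector of $\mathcal Q$, a representative point $(x,y)$ in the corresponding cell, I form the matrix $P$ with rows $(x,y,1)$ and the matrix $L_0$ with columns $(a_j,b_j,c_j)^{\top}$, and append the column $(0,0,1)^{\top}$. Since $(x,y,1)\cdot(0,0,1)^{\top}=1>0$, the product $P\,[\,L_0\mid(0,0,1)^{\top}\,]$ has rank $\le 3$ and sign pattern equal to $\mat(\mathcal L(\mathcal Q))$ with an appended all-$+$ column. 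Negating the columns indexed by $A$ reorients $\mathcal L(\mathcal Q)$ into $\mathcal L$, preserves the rank, and leaves the last column all $+$; after reordering rows this matrix has sign pattern exactly $S$, so $\minrank(S)\le 3$.

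For the ``if'' direction, take $B$ of rank $\le 3$ with $\sgn(B)=S$ and factor $B=P\cdot L=(PT^{-1})(TL)$, where $T$ is an invertible $3\times 3$ map sending the last column $\ell_{m+1}$ of $L$ (which is nonzero, since $P\ell_{m+1}>0$ entrywise) to $(0,0,1)^{\top}$. Writing $\tilde P=PT^{-1}$, the $i$th entry of the last column of $B$ equals the third coordinate of the $i$th row of $\tilde P$, which is therefore strictly positive; scaling each row of $\tilde P$ by the reciprocal of its third coordinate (a positive, hence sign-preserving, operation) makes every row of the form $(x_i,y_i,1)$. The first $m$ columns of $TL$ are nonzero (a zero column would force an all-$0$ column of $S$, impossible for a pseudoline, which has points on both sides), so they define a genuine line arrangement $\mathcal Q$ in $\R^2$, and by construction each covector in $\mathcal L$ is the covector of one of the points $(x_i,y_i)$ with respect to $\mathcal Q$. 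Hence $\mathcal L\subseteq\mathcal L(\mathcal Q)$, with matching signs.

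The crux is upgrading this inclusion to the equality $\mathcal L=\mathcal L(\mathcal Q)$, which identifies $\mathcal{OM}(\mathcal P)$ with $\mathcal{OM}(\mathcal Q)$ and, via Theorem~\ref{thm:trt}, exhibits $\mathcal Q$ as a stretching of $\mathcal P$. Both objects are rank-$3$ oriented matroids on $E$: $\mathcal{OM}(\mathcal Q)$ has rank $\le 3$ because it is realized by a rank-$\le 3$ matrix, and rank $\ge 3$ because it contains $\mathcal L$, whose covector poset already has maximal flags (point $\subset$ edge $\subset$ region) of length three. For oriented matroids of equal rank on a common ground set, covector containment forces equality: any discrepancy between $\mathcal P$ and $\mathcal Q$ in the incidence pattern of the (pseudo)lines is witnessed by a covector lying in exactly one of the two arrangements---for instance, three lines concurrent in one arrangement but not the other produce either a full-support tope or a triple-zero covector that is present on one side and absent on the other---contradicting $\mathcal L\subseteq\mathcal L(\mathcal Q)$. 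Making this witnessing argument precise (equivalently, invoking the equal-rank covector-containment property of oriented matroids) is the step I expect to be the main obstacle; the remainder is the routine linear algebra of the factorization.
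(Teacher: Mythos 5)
Your proposal is correct and takes essentially the same route as the paper's proof: the identical point--line factorization $A = L\cdot R$ in both directions, with the augmented all-$+$ column playing the role of a line missing all the points (allowing normalization of the third coordinates to $1$), followed by an appeal to Theorem~\ref{thm:trt}. The one step you flag as the main obstacle---upgrading the containment $\mathcal{L}\subseteq\mathcal{L}(\mathcal{Q})$ to equality---is exactly the step the paper glosses over (it simply asserts that the constructed line arrangement ``has the same set of covectors as $\mathcal{P}$''), and the equal-rank covector-containment fact you invoke to close it is a genuine, citable property of oriented matroids (a rank-preserving strong map is an equality; equivalently, the number of covectors of an oriented matroid is determined by its underlying matroid, so containment plus equal rank forces equality), so your write-up is, if anything, more careful than the paper's at precisely this point.
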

\begin{proof}
Let $m = |\mathcal{C}_0(\mathcal{P})| +  |\mathcal{C}_1(\mathcal{P})| + |\mathcal{C}_2(\mathcal{P})|$ (the total number of covectors of $\mathcal P$),
and let $n = |E|$. Note that $S$ is a $m \times (n+1)$ {\gspm}.

Let us first prove that if the given pseudoline arrangement is stretchable, then $\minrank(S) \leq 3$. If the given pseudoline arrangement is stretchable then there is a line arrangement $\mathcal L$ such that a suitable reorientation of the oriented matroid associated with this line arrangement equals $\mathcal {OM}(\mathcal P)$.
Suppose the lines in $\mathcal L$ are given by the equations $a_j X + b_j Y + c_j = 0$, for $j \in [n+1]$. Let $(x_1, y_1), \ldots, (x_m, y_m)$ be points
in $\mathbb R^2$ whose covectors w.r.t. $\mathcal L$ equal all the possible covectors of $\mathcal L$.

Now construct matrices $L_{m\times 3}$ and $R_{3\times (n+1)}$ as follows. The $i$th row of $L$ equals $(x_i,y_i,1)$.
The $j$th column of $R$ equals $(a_j,b_j,c_j)$. Now consider $A=L \cdot R$, and note that $\rank(A) \leq 3$. Note that $\sgn(A_{ij})$ equals $+/-/0$ depending
on whether $(x_i, y_i)$ lies on the $+$ side/$-$ side/on the $j$th line. Thus the $\sgn$s of the rows of $A$ equal
the covectors of $\mathcal L$. We know that there is a reorientation of $\mathcal {OM}(\mathcal P)$ which equals $\mathcal {OM}(\mathcal P)$;
this means that flipping the signs of some rows of $A$ will give us a matrix $A'$ with $\sgn(A') = S$. Thus $S$ has minimum rank at most $3$.

In the other direction, suppose that $\minrank(S) \leq 3$. Then there exists some matrix $A \in \mathbb{R}^{m \times (n+1)}$ having rank at most $3$ and $\sgn(A) = S$. By dividing row $i$ of $A$ with $A_{(i,n+1)}$ for all $i$ can make the last column the all $1$s vector. Since the column rank of $A$ is at most $3$, there exist at most three linearly independent vectors such that every column vector of $A$ can be represented as a linear combination of these vectors. Pick $v_3 = [1,1,1,...1]$ as one of the vectors and two other column vectors $v_1, v_2$ such that all column vectors of $A$ are in the span of $v_1, v_2$ and $v_3$. Now, $A$ can be written as product of two matrices $A=L_{m\times 3}R_{3\times (n+1)}$, where the columns of $L$ are $v_1,v_2,v_3$. Each column $(a,b,c)$ of $R$ gives us the equation of a line $aX+bY+c=0$; then the arrangement of lines given by the first $n$ columns of $R$ has the same set of covectors as $\mathcal{P}$. Hence by Theorem \ref{thm:trt}, this line arrangement is equivalent to the pseudoline arrangement $\mathcal{P}$, and thus $\mathcal{P}$ is stretchable.
\end{proof}

{\bf Proof of Theorem \ref{thm:main}:}
Since stretchability of a pseudoline arrangement is equivalent to existential theory of reals \cite{shor}, the proof of this theorem follows directly from Lemma \ref{lemma:main}.

\subsection{Proof of Theorem~\ref{thm:strictmain}}

We now prove hardness for computing the minimum rank of {\spms}, a special case of {\gspms}.

To do this, we will study the covectors of {\em uniform} pseudoline arrangements in some more detail.
If the pseudoline arrangement is uniform then $|\mathcal{C}_0(\mathcal{P})| = {n \choose 2}$, $|\mathcal{C}_1(\mathcal{P})| = |\mathcal{C}_2(\mathcal{P})| = 1 + {n+1 \choose 2}$. By Theorem \ref{thm:trt}, the problem of stretchability of a pseudoline arrangement $\mathcal P$ can be translated into finding a line arrangement whose set
of covectors equals the set of covectors of $\mathcal P$ (up to some sign flipping). But for uniform pseudoline arrangements, we will see that we only need to find a line arrangement whose set of covectors corresponding to $2$-cells equals the set of covectors of $\mathcal P$ corresponding to $2$-cells (up to some sign flipping).

\begin{lemma}
\label{lemma:suff}
For uniform pseudoline arrangements $\mathcal{P}_1$ and $\mathcal{P}_2$, $\mathcal{P}_1$ and $\mathcal{P}_2$ are equivalent if $\mathcal{C}_2(\mathcal{P}_1) = \mathcal{C}_2(\mathcal{P}_2)$ (up to reorientation).
\end{lemma}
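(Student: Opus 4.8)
The plan is to upgrade the hypothesis about $2$-cells to full equality of covector sets and then invoke the Topological Representation Theorem (Theorem~\ref{thm:trt}). Writing $\mathcal{L}(\mathcal{P}) = \mathcal{C}_0(\mathcal{P}) \cup \mathcal{C}_1(\mathcal{P}) \cup \mathcal{C}_2(\mathcal{P}) \cup \{0\}$ for the covectors of $\mathcal{OM}(\mathcal{P})$, it suffices to show that, for a uniform arrangement, the set $\mathcal{C}_2(\mathcal{P})$ determines $\mathcal{C}_1(\mathcal{P})$, which in turn determines $\mathcal{C}_0(\mathcal{P})$, through combinatorial rules that only query membership in the previously determined set. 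Since reorientation by a set $A \subseteq E$ acts coordinatewise (flipping signs in the positions of $A$), any such rule is automatically reorientation-equivariant; hence if $\mathcal{C}_2(\mathcal{P}_1)$ equals $\mathcal{C}_2(\mathcal{P}_2)$ after reorienting by some $A$, the same reorientation equates $\mathcal{C}_1$ and then $\mathcal{C}_0$, giving $\mathcal{L}(\mathcal{P}_1) = \mathcal{L}(\mathcal{P}_2)$ up to reorientation, i.e. $\mathcal{OM}(\mathcal{P}_1) \cong \mathcal{OM}(\mathcal{P}_2)$, and Theorem~\ref{thm:trt} then yields equivalence. Uniformity is used exactly to guarantee the clean zero-counts recorded before the lemma: every $1$-cell covector has exactly one $0$ and every $0$-cell covector has exactly two $0$s, so the candidate covectors are precisely the sign vectors with one, respectively two, zeros.

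For the first reduction I would prove the rule: a sign vector $w$ with a single $0$, in position $\ell$, lies in $\mathcal{C}_1(\mathcal{P})$ if and only if both sign-extensions $w^{+}$ and $w^{-}$ (replacing the $0$ by $+$, resp. $-$) lie in $\mathcal{C}_2(\mathcal{P})$. The forward direction is immediate: a $1$-cell on the pseudoline $\ell$ borders two $2$-cells, one on each side of $\ell$, whose covectors agree with $w$ off $\ell$ and are $+$, $-$ at $\ell$. For the converse I would delete $\ell$ to obtain the uniform arrangement $\mathcal{P} \setminus \ell$ on $E \setminus \{\ell\}$ and use the key fact that the covector map is injective on cells, i.e. distinct regions carry distinct sign vectors (consistent with the count $|\mathcal{C}_2(\mathcal{P})| = 1 + \binom{n+1}{2}$ recorded above). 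If $w^{+}, w^{-} \in \mathcal{C}_2(\mathcal{P})$, then both restrict to the same covector of $\mathcal{P} \setminus \ell$, hence lie in a single region $R$ of $\mathcal{P}\setminus\ell$; as $R$ contains a point on each side of $\ell$, the curve $\ell$ must cross $R$ and split it, and by injectivity $R$ is split into exactly the two cells $w^{+}, w^{-}$. Their common boundary lies on $\ell$ and on no other pseudoline (points of $R$ lie on no pseudoline of $E\setminus\{\ell\}$), so it realizes the covector $w$, giving $w \in \mathcal{C}_1(\mathcal{P})$.

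For the second reduction I would prove: a sign vector $v$ with zeros exactly in positions $\ell, \ell'$ lies in $\mathcal{C}_0(\mathcal{P})$ if and only if the two extensions $v|_{\ell' \to +}$ and $v|_{\ell' \to -}$ lie in $\mathcal{C}_1(\mathcal{P})$. Here I would argue one dimension lower, along the pseudoline $\ell$ itself: its $1$-cells are the arcs cut out by the $n-1$ crossing points with the other pseudolines, and distinct arcs again carry distinct covectors. Since any two pseudolines meet in exactly one point, each other pseudoline flips its coordinate exactly once as one traverses $\ell$; hence two arcs that agree on $E \setminus \{\ell,\ell'\}$ but differ at $\ell'$ can only be the two arcs flanking the crossing $\ell \cap \ell'$ (a path between them along $\ell$ crossing a third pseudoline would flip a further coordinate and never restore it). That crossing point is a $0$-cell with covector $v$, and the forward direction is the easy local picture at the vertex $\ell \cap \ell'$.

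The main obstacle is the converse (adjacency) direction in both reductions: the content is not that adjacent cells have related covectors, but that cells with related covectors are forced to be geometrically adjacent. The tool that makes this work is the injectivity of the covector map on the faces of a uniform pseudoline arrangement, combined with the good behaviour of deletion (removing a pseudoline leaves a uniform arrangement whose faces are unions of faces of the original, indexed by the retained coordinates). I would isolate covector-injectivity and deletion compatibility as two short preliminary observations, after which the two reductions become the short arguments sketched above, and the lemma follows by the equivariance remark of the first paragraph.
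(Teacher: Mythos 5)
Your proposal is correct and takes essentially the same route as the paper: both reconstruct the $0$- and $1$-cell covectors of a uniform arrangement from $\mathcal{C}_2$ via local, reorientation-equivariant combinatorial rules and then invoke Theorem~\ref{thm:trt}, the only structural difference being that you go $\mathcal{C}_2 \to \mathcal{C}_1 \to \mathcal{C}_0$ while the paper's formal definitions go $\mathcal{C}_2 \to \mathcal{C}_0' \to \mathcal{C}_1'$. The adjacency (converse) directions you establish via covector-injectivity and deletion are precisely the steps the paper compresses into ``it is easy to show,'' so your write-up is a more detailed rendering of the same argument rather than a different one.
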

\begin{proof}
To prove this lemma, we will show - given the covectors in $\mathcal{C}_2$ of an uniform pseudoline arrangement, we can {\em uniquely} determine the set of covectors $\mathcal{C}_0$ and $\mathcal{C}_1$ associated with the arrangement. Hence by Theorem \ref{thm:trt}, the set of covectors $\mathcal{C}_2$ is enough to determine the equivalence class of a uniform pseudoline arrangement.

Let $\mathcal{P}$ be any uniform pseudoline arrangement with $n$ pseudolines. Note that every sign pattern in $\mathcal{C}_2(\mathcal{P})$ has full support. 

The idea to construct $\mathcal{C}_0(\mathcal{P})$ and $\mathcal{C}_1(\mathcal{P})$ given $\mathcal{C}_2(\mathcal{P})$ is simple (See figure~\ref{fig:covectors}): Suppose two sign patterns differ in only one position then the sign pattern with $0$ at that position, keeping other signs same as one of the sign vector, is in $\mathcal{C}_1(\mathcal{P})$. Similarly, if there are four sign vectors such that for a fixed two positions, every pair differs in those two positions, then the sign vector which has $0$ at these two positions, keeping other signs same as one of the sign vectors, is in $\mathcal{C}_0(\mathcal{P})$.

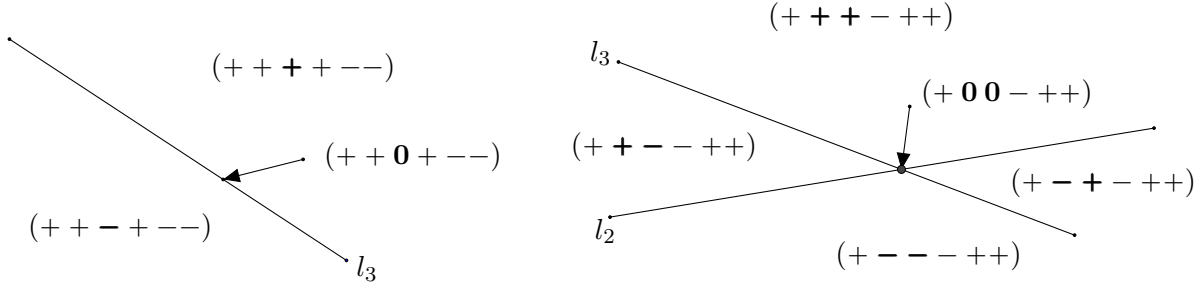
\begin{figure}[h!]
\label{fig:covectors}
\centering
\definecolor{uuuuuu}{rgb}{0.266666666667,0.266666666667,0.266666666667}
\definecolor{qqqqff}{rgb}{0.,0.,1.}
\begin{tikzpicture}[line cap=round,line join=round,>=triangle 45,x=0.8cm,y=0.8cm]
\clip(-2.4,-1.2) rectangle (21.,6.3);
\draw (3.52,0.66)-- (-2.08,4.34);
\draw (1.1,4.28) node[anchor=north west] {$(+ + \bm{+} + - - )$};
\draw (-1.94,1.62) node[anchor=north west] {$(+ + \bm{-} + - -)$};
\draw [->] (2.8,2.34) -- (1.47108323831,2.00643101482);
\draw (2.98,2.8) node[anchor=north west] {$(+ + \bm{0} + - -)$};
\draw (8.04,3.96)-- (15.62,1.08);
\draw (7.9,1.38)-- (16.94,2.86);
\draw (10.36,5.12) node[anchor=north west] {$(+ \bm{+ +} - + +)$};
\draw (7.08,2.96) node[anchor=north west] {$(+ \bm{+ -} - + +)$};
\draw (11.48,1.14) node[anchor=north west] {$(+ \bm{- -} - + +)$};
\draw (14.38,2.36) node[anchor=north west] {$(+ \bm{- +} - + +)$};
\draw [->] (12.88,3.22) -- (12.74341873,2.17294908412);
\draw (12.9,3.84) node[anchor=north west] {$(+\: \bm{0\: 0} - + +)$};
\draw (3.52,0.9) node[anchor=north west] {$l_3$};
\draw (7.48,1.52) node[anchor=north west] {$l_2$};
\draw (7.4,4.48) node[anchor=north west] {$l_3$};
\begin{scriptsize}
\draw [fill=qqqqff] (3.52,0.66) circle (0.5pt);
\draw [fill=black] (-2.08,4.34) circle (0.5pt);
\draw [fill=black] (2.8,2.34) circle (0.5pt);
\draw [fill=black] (1.47108323831,2.00643101482) circle (0.5pt);
\draw [fill=black] (8.04,3.96) circle (0.5pt);
\draw [fill=black] (15.62,1.08) circle (0.5pt);
\draw [fill=black] (7.9,1.38) circle (0.5pt);
\draw [fill=black] (16.94,2.86) circle (0.5pt);
\draw [fill=black] (12.88,3.22) circle (0.5pt);
\draw [fill=uuuuuu] (12.74341873,2.17294908412) circle (1.5pt);
\end{scriptsize}
\end{tikzpicture}
\caption{An example of deciding the sign pattern of the $1$ (left) and $0$ (right) cell complex given sign patterns of $2$ cell complexes.}
\end{figure}
More formally,
\begin{align*}
{\mathcal{C}_0^{'}} &= \left\{ \begin{array}{c|c} s \in \{-,0,+\}^n  &  \text{ } \exists \text{\it i, j s.t. }, i\neq j,  s_i = s_j = 0 \text{ and } \\ & \exists v, w \in {\mathcal{C}_2} \text{  s.t. } v_i \neq w_j \text{ and } v_j\neq w_i \text{ and } v_k=w_k=s_k, \forall k\neq i, j \end{array}\right\}\\
{\mathcal{C}_1^{'} } &= \{ s \in \{-,0,+\}^n  |  \text{ } \exists \text{ i  s.t. } s_i \in \{-,+\} \text{ and } \exists v\in {\mathcal{C}_0^{'}} \text{  s.t. } v_i=0 \text{ and } v_j=s_j,  \forall j\neq i \}\\
\end{align*}
It is easy to show that ${\mathcal{C}_0^{'}}$ and ${\mathcal{C}_1^{'}}$ contain exactly those sign vectors which are present in  $\mathcal{C}_0(\mathcal{P})$ and  $\mathcal{C}_1(\mathcal{P})$ respectively. Hence, the set of covectors $\mathcal{C}_2(\mathcal{P})$ is enough to determine the oriented matroid associated with uniform pseudoline arrangement.
\end{proof}

\begin{lemma}
\label{lemma:main2}
Let $\mathcal{P}$ be a {\em uniform} pseudoline arrangement.
Then $\mathcal{P}$ is stretchable if and only if $\minrank(S ) \leq 3$, where
$S = \mat(\mathcal C_2(\mathcal P)))$. 
Furthermore $S$ is a sign pattern matrix.
\end{lemma}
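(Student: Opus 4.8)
The plan is to follow the proof of Lemma~\ref{lemma:main} as closely as possible, replacing the full covector matrix by $\mat(\mathcal{C}_2(\mathcal{P}))$ and using Lemma~\ref{lemma:suff} to compensate for the covectors in $\mathcal{C}_0$ and $\mathcal{C}_1$ that we have discarded. As in Lemma~\ref{lemma:main} it is convenient to augment $\mat(\mathcal{C}_2(\mathcal{P}))$ with an all-$+$ column, which plays the role of a line at infinity and lets us normalize the realizing points to be affine. Every covector of a $2$-cell has full support and the new column has no $0$ entries, so $S$ has entries only in $\{+,-\}$; this already establishes the last sentence of the lemma, that $S$ is a sign pattern matrix.

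For the forward direction ($\mathcal{P}$ stretchable $\Rightarrow \minrank(S)\le 3$) I would argue verbatim as in Lemma~\ref{lemma:main}: a stretching gives a straight-line arrangement whose oriented matroid agrees with $\mathcal{OM}(\mathcal{P})$ up to reorientation; choosing one affine point $(x_i,y_i,1)$ inside each $2$-cell and forming $A = L R$ with the lines as the columns of $R$ yields a matrix with $\rank(A)\le 3$ whose row signs are exactly the $2$-cell covectors of the line arrangement. Reorienting to match $\mathcal{OM}(\mathcal{P})$ amounts to flipping the signs of certain columns of $A$, which preserves the rank, so $\minrank(S)\le 3$.

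The reverse direction is where the new ideas enter. Suppose $\minrank(S)\le 3$, and take $A$ with $\rank(A)\le 3$ and $\sgn(A)=S$. Using the all-$+$ column I would rescale the rows so the realizing points become affine, factor $A = L_{m\times 3} R_{3\times (n+1)}$, and read off $n$ real lines from the first $n$ columns of $R$. Since every row of $A$ has full support, each realizing point avoids all $n$ lines and hence lies in a $2$-cell, so the row signs give a set of \emph{distinct} $2$-cell covectors of this line arrangement that contains $\mathcal{C}_2(\mathcal{P})$. The crucial step is a counting argument: for a uniform arrangement of $n$ pseudolines $|\mathcal{C}_2(\mathcal{P})| = 1+\binom{n+1}{2}$, which is exactly the maximum possible number of $2$-dimensional regions of any arrangement of $n$ straight lines. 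Having already exhibited that many distinct regions, the recovered line arrangement must attain the maximum; hence it is itself uniform (all $n$ lines are genuine, distinct, and in general position) and its set of $2$-cell covectors is exactly $\mathcal{C}_2(\mathcal{P})$, up to the reorientation recorded by the column sign flips. Finally I would apply Lemma~\ref{lemma:suff}: since the recovered uniform line arrangement and $\mathcal{P}$ share the same $\mathcal{C}_2$ up to reorientation, they are equivalent, so $\mathcal{P}$ is stretchable.

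The main obstacle is this reverse direction, and within it the delicate point is upgrading ``contains $\mathcal{C}_2(\mathcal{P})$'' to ``equals $\mathcal{C}_2(\mathcal{P})$'': this is precisely what the extremal value $1+\binom{n+1}{2}$ buys, forcing the recovered arrangement to be uniform. The same count also rules out the degenerate possibilities that two of the first $n$ columns of $R$ give the same line or that some column is a non-line of the form $(0,0,\ast)$, either of which would strictly lower the region count and hence is impossible.
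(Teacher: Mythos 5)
Your overall strategy (factor a rank-$3$ realization into points times lines, then invoke Lemma~\ref{lemma:suff}) is the same as the paper's, and your extremal counting argument is actually more explicit than what the paper writes: the paper simply asserts that the recovered arrangement's $2$-cell covectors ``exactly equal'' $\mathcal{C}_2(\mathcal{P})$, whereas your observation that $1+\binom{n+1}{2}$ is the maximum possible number of regions of $n$ affine lines is precisely what justifies that step and rules out repeated or degenerate columns of $R$. However, there is a genuine gap: you have proved a different statement from the one in the lemma. The lemma sets $S = \mat(\mathcal{C}_2(\mathcal{P}))$ with \emph{no} appended column, so its reverse direction assumes only $\minrank(\mat(\mathcal{C}_2(\mathcal{P}))) \leq 3$, while your reverse direction assumes that the \emph{augmented} matrix has $\minrank \leq 3$. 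This is a strictly stronger hypothesis: deleting a column never increases $\minrank$ (so your forward direction does imply the stated one), but appending the all-$+$ column can a priori increase it. Concretely, a rank-$\leq 3$ matrix $A$ with $\sgn(A) = \mat(\mathcal{C}_2(\mathcal{P}))$ can be extended by an all-$+$ column without increasing the rank only if the column span of $A$ contains a strictly positive vector, equivalently only if the $m$ points given by the rows of the left factor $L$ lie in an open halfspace of $\R^3$. Nothing in the hypothesis guarantees this, and without it the normalization of the points to the affine plane $z=1$ --- on which your entire counting argument rests --- is unavailable; the realizing points could sit ``projectively'' all around the origin.

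The paper gets its all-$+$ column without changing the hypothesis: it flips the signs of some \emph{rows} of $S$, an operation that preserves $\minrank$ (flip the same rows of any realizing matrix), so that the first column of $S$ itself becomes all $+$, and then uses that column for the normalization. Be warned that if you repair your proof this way, your counting argument must be reworked rather than quoted verbatim: for a uniform arrangement, $\mathcal{C}_2(\mathcal{P})$ contains exactly $n$ antipodal pairs of covectors (the pairs of opposite unbounded regions, i.e.\ the two affine pieces of a projective cell crossing the line at infinity), and forcing the first coordinate to be $+$ by row flips collapses each such pair to a single repeated row, leaving only $1+\binom{n}{2}$ distinct covectors; moreover the first column of $R$ then represents the degenerate functional $(0,0,1)$ rather than a line. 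So that route really recovers an arrangement of $n-1$ lines realizing all $1+\binom{n}{2}$ of its regions --- in effect a stretching of $\mathcal{P}$ with its first pseudoline sent to infinity --- and one needs an extra projective-transformation step to convert it into a stretching of $\mathcal{P}$ itself. To close the gap you should either carry out that route in full, or prove directly that $\minrank(\mat(\mathcal{C}_2(\mathcal{P}))) \leq 3$ forces some realization whose points lie in an open halfspace, so that your augmented-column argument applies.
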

\begin{proof}
The proof is along the same line as the proof of lemma~\ref{lemma:main}.
Let $n$ be the number of lines in $\mathcal P$, $m = 1 + {n+1 \choose 2}$.
Note that $S$ is an $m \times n$ sign pattern matrix with no $0$ (which
follows from the fact that covectors in $\mathcal C_2(\mathcal P)$ have no $0$s).

We first prove that if the given pseudoline arrangement is stretchable, then $\minrank(\mat(S)) \leq 3$. If the given pseudoline arrangement is stretchable then there is a line arrangement $\mathcal L$ such that up to reorientation, $\mathcal C_2(\mathcal L) = \mathcal C_2(\mathcal P)$ (this is where we use Lemma~\ref{lemma:suff}). From this line arrangement, we can construct a matrix $L_{m\times 3}$ and $R_{3\times n}$ as before, where the rows of $L$ now correspond to points in the $2$-cells of the cell complex given by $\mathcal L$.  Consider $A=L\cdot R$, and note that
$\rank(A) \leq 3$. Since $\mathcal C_2(\mathcal L)$ equals $\mathcal C_2(\mathcal P)$ up to reorientation, 
and the $\sgn$s of the rows of $\mathcal A$ precisely equals $\mathcal C_2(\mathcal L)$,
we can flip the sign of some rows of $\mathcal A$ to get a matrix $A'$ with $\sgn(A') = S$ and $\rank(A') \leq 3$. Thus $\minrank(S) \leq 3$.

In the other direction, suppose $\minrank(\mat(S)) \leq 3$. Without loss of generality, assume that the first column of $S$ is all $+$'s (by flipping the signs of some rows). Now if $\minrank(\mat(S)) \leq 3$ then there exists some matrix $A \in \mathbb{R}^{m\times n}$ having rank at most $3$ and $\sgn(A) = \mat(S)$. By dividing every row $i$ of $A$ with $A_{(i,1)}$ we get the first column to equal the [1,1,1,...1] vector (since all entries in $A$ are non-zero). Since the column rank of $A$ is at most $3$, there exists $3$ vectors such that every column vector of $A$ can be represented as a linear combination of these vectors. Pick $v_3 = [1,1,1,...1]$ as one of the vector and two other column vectors $v_1, v_2$ such that all column vectors of $A$ are in the span of $v_1, v_2$ and $v_3$. Write $A$ as $A=L_{m\times 3}R_{3\times n}$, where columns of $L$ are $v_1,v_2,v_3$. If we consider each column $(a,b,c)$ of $R$ as an equation of line $ax+by+c=0$ then the set of covectors of 2-cell complexes formed by the $n$ lines exactly equals $\mathcal{C}_2(\mathcal{P})$. Hence by Lemma \ref{lemma:suff}, this lines arrangement is equivalent to the pseudoline arrangement $\mathcal{P}$. Thus $\mathcal{P}$ is stretchable.
\end{proof}

{\bf Proof of Theorem $\ref{thm:strictmain}$:}
Proof of this theorem follows from the $\NP$ hardness of stretchability of {\it uniform} pseudoline arrangement and Lemma \ref{lemma:main2}.


\section{A polynomial time algorithm for deciding if $\minrank \leq 2$}

In this section, we give a polynomial time algorithm to detect if the minimum rank of a given sign
pattern matrix is $\leq 2$.

We first give a simple combinatorial condition characterizing the {\spms}s with 
minimum rank $\leq 2$. It turns out that this combinatorial condition can be easily checked in polynomial time.
Next, we show how to reduce the problem for general sign pattern matrices to the problem for  {\spms} (this reduction only works for deciding $\minrank \leq 2$, not $\minrank \leq r$ for a general $r$).

\subsection{Sign pattern matrices}

We say a sign vector $s$ is contained in subspace $V$ if there exists a vector $v\in V$ such that $\sgn(v)=s$.
We say the subspace $V \subseteq \mathbb{R}^m$ {\em realizes} a {\spm} $S$ if all the columns of $S$ are contained in $V$.
Note that $\minrank(S)$ equals the smallest dimension of a subspace $V \subseteq \R^m$ that realizes $S$.

Our algorithm will be based on looking at certain set systems associated with {\spms}.
For sign vector $s$ of length $m$, let $T^-(s)$ denotes the set of $i\in [m]$ such that $s_i = -$. Given an $m \times n${\spm} $S$, define $T(S) :=\{ T^-(c_i) : 1\leq i\leq n\}$, where $c_i$ is the $i$th column of matrix $S$.

As a warmup, we start with a bound on the maximum number of distinct columns that a {\spm} of small $\minrank$ can have.
This follows from the next lemma, which gives an upper bound on the number of distinct sign patterns that are contained in a
subspace of dimension $d$ in $\mathbb{R}^m$ (we will eventually only use the $d = 2$ case of this).
\begin{obs}
\label{thm:maxsign}
Let $V$ be any subspace of dimension $d$ in $\mathbb{R}^m$. Then number $s(m,d)$ of different sign patterns (with no $0$) occurring in $V$ is at most
\begin{equation*}
s(m,d) \leq 2\left[ {m-1 \choose 0} + {m-1 \choose 1} +  {m-1 \choose 2} + \cdots +  {m-1 \choose d-1} \right]
\end{equation*}
\end{obs}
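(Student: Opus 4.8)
The plan is to count the number of distinct full-support sign patterns realized by a $d$-dimensional subspace $V \subseteq \mathbb{R}^m$ by relating them to regions cut out by hyperplanes. First I would fix a basis for $V$, so that every vector in $V$ is of the form $Mx$ for $x \in \mathbb{R}^d$, where $M$ is the $m \times d$ matrix whose columns are the basis vectors. The sign pattern of $Mx$ is determined by the signs of the $m$ coordinates, i.e.\ by the signs of the $m$ linear functionals $\ell_i(x) = \langle r_i, x\rangle$, where $r_i$ is the $i$th row of $M$. Thus a full-support sign pattern $s \in \{-,+\}^m$ occurs in $V$ if and only if there exists $x \in \mathbb{R}^d$ with $\sgn(\ell_i(x)) = s_i$ for all $i$; equivalently, $s$ occurs iff the corresponding open region is nonempty in the arrangement of the $m$ hyperplanes $H_i = \{x : \langle r_i, x\rangle = 0\}$ through the origin in $\mathbb{R}^d$.

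The key step is then a counting argument for the number of full-dimensional regions (chambers) cut out by $m$ hyperplanes through the origin in $\mathbb{R}^d$. I would use the standard result that an arrangement of $m$ hyperplanes in general position through the origin in $\mathbb{R}^d$ divides space into exactly $2\sum_{k=0}^{d-1}\binom{m-1}{k}$ chambers, and that this is an upper bound when the hyperplanes are in arbitrary (possibly degenerate) position, since coincidences only merge or eliminate regions. This matches the claimed bound exactly. The cleanest way to obtain this count is by induction on $m$: adding the $m$th hyperplane $H_m$ to an arrangement of $m-1$ central hyperplanes splits a chamber into two precisely when $H_m$ passes through it, and the number of chambers that $H_m$ meets equals the number of chambers of the induced $(d-1)$-dimensional arrangement on $H_m$ coming from the traces $H_i \cap H_m$ for $i < m$. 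Writing $C(m,d)$ for the maximum number of central chambers, this gives the recurrence $C(m,d) \le C(m-1,d) + C(m-1,d-1)$, with the base cases $C(m,1) = 2$ (a single line through the origin, any number of coincident central hyperplanes in $\mathbb{R}^1$) and $C(0,d) = 1$ handled directly.

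Finally I would verify that the claimed closed form solves this recurrence. Setting $f(m,d) = 2\sum_{k=0}^{d-1}\binom{m-1}{k}$, one checks $f(m-1,d) + f(m-1,d-1) = 2\sum_{k=0}^{d-1}\binom{m-2}{k} + 2\sum_{k=0}^{d-2}\binom{m-2}{k}$, and combining the two sums term by term via Pascal's identity $\binom{m-2}{k} + \binom{m-2}{k-1} = \binom{m-1}{k}$ recovers $2\sum_{k=0}^{d-1}\binom{m-1}{k} = f(m,d)$, so the closed form satisfies the recurrence with matching base cases. The main obstacle I anticipate is the bookkeeping in the inductive step: one must argue carefully that degeneracies (three or more hyperplanes sharing a common subspace, or the induced traces on $H_m$ being non-generic) can only decrease the chamber count, so that the generic bound is indeed an upper bound for arbitrary arrangements. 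This monotonicity-under-degeneration point is what makes the inequality (rather than equality) the correct statement, and it is the step that needs the most care.
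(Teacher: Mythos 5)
Your proof is correct, but it takes a genuinely different route from the paper's. The paper argues by extremal set theory: it pairs each full-support pattern $s$ with $-s$, keeps the half of the patterns whose last coordinate is $+$, and maps each such pattern to its set of negative coordinates, a subset of $[m-1]$. If the number of patterns exceeded the claimed bound, the Sauer--Shelah lemma would give a set $U \subseteq [m-1]$ of size $d$ shattered by this family, so that all $2^{d+1}$ sign patterns occur on the coordinates $U' = U \cup \{m\}$; this is impossible because $|U'| = d+1 > \dim V$, so some nonzero vector $w$ supported on $U'$ is orthogonal to $V$, and no vector of $V$ can realize the sign pattern agreeing with $w$ on its support. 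You instead work in the parameter space $\mathbb{R}^d$: a basis of $V$ identifies the full-support sign patterns occurring in $V$ with the chambers of a central arrangement of $m$ hyperplanes, which you count via the classical Schl\"afli/Cover recurrence $C(m,d) \leq C(m-1,d) + C(m-1,d-1)$, checking with Pascal's identity that the closed form satisfies it. Both are standard proofs of this classical bound, and both are complete. Yours is more elementary and self-contained (no appeal to Sauer--Shelah) and yields the exact chamber count in general position as a byproduct; the cost is the geometric bookkeeping you correctly flag---the bijection between chambers cut by $H_m$ and chambers of the induced arrangement on $H_m$, plus degenerate cases (coincident hyperplanes, the possibility $H_i = H_m$, and zero rows of the basis matrix, which kill all full-support patterns outright). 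Note that if you define $C(m,d)$ as the maximum chamber count over \emph{arbitrary}, possibly degenerate, central arrangements, the degeneracy worry largely evaporates: the induced arrangement on $H_m$ is itself an arbitrary central arrangement of at most $m-1$ hyperplanes in a $(d-1)$-dimensional space, so the recurrence applies verbatim, with monotonicity of $C$ in $m$ absorbing repeated hyperplanes. The paper's route is shorter once Sauer--Shelah is taken off the shelf; yours makes the geometry that underlies the bound explicit.
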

\begin{proof} 
Suppose not. Let $\mathfrak{F}$ be the set of all sign patterns contained in $V$; we have $|\mathfrak F| >  s(m,d)$.
Let $\mathfrak{F}'$ be the subset of $\mathfrak{F}$ whose last coordinate equals $+$; note that $|\mathfrak F'| = |\mathfrak F|/2$.
Let $T$ be the family of sets $ \{ T^{-}(c) \mid c \in \mathfrak F' \}$, and note that $|T| = |\mathfrak F'|$.

By the Sauer-Shelah Lemma \cite{Sauer, Shelah}, there exists a subset of coordinates $U \subseteq [m-1], |U| = d $ such that
$T|_{U} = 2^{U}$. Let $U' = U \cup \{m\}$. Then  $\mathfrak{F}|_{U'} = \{+, - \}^{d + 1}$.
But this cannot be, since there is some vector $w \in \R^m$ supported on coordinates in $U'$ which is orthogonal to all vectors in $V$,
and therefore $\sgn(w)|_{U'} \not\in \mathfrak{F}|_{U'}$.
\end{proof}

\begin{cor}
\label{cor:maxsign2}
A 2-dimensional subspace in $\mathbb{R}^m$ can have at most $2m$ distinct sign patterns with no $0$.
\end{cor}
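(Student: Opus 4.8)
The plan is to obtain this as an immediate specialization of Observation~\ref{thm:maxsign}, since the corollary is exactly the $d=2$ instance of that general bound. First I would invoke the observation with the subspace dimension set to $d=2$, which gives
\[
s(m,2) \leq 2\left[ \binom{m-1}{0} + \binom{m-1}{1} \right].
\]
Then I would evaluate the binomial sum inside the brackets: $\binom{m-1}{0} = 1$ and $\binom{m-1}{1} = m-1$, so the bracketed quantity equals $1 + (m-1) = m$. Multiplying by the factor of $2$ yields the claimed bound $s(m,2) \leq 2m$.

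There is essentially no technical obstacle here, since all of the real content---the Sauer--Shelah argument and the orthogonal-vector witness---has already been carried out in the proof of Observation~\ref{thm:maxsign}. The only point requiring a moment of care is confirming that the truncated sum $\sum_{i=0}^{d-1}\binom{m-1}{i}$ for $d=2$ runs over exactly the two terms $i=0$ and $i=1$, so that no higher binomial coefficients are erroneously included. Once that indexing is checked, the arithmetic is routine and the corollary follows directly.
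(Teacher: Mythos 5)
Your proposal is correct and matches the paper's intent exactly: the corollary is stated immediately after Observation~\ref{thm:maxsign} precisely as its $d=2$ specialization, and the arithmetic $2\left[\binom{m-1}{0}+\binom{m-1}{1}\right] = 2m$ is all that is needed.
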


We now come to the combinatorial property that will characterize when $\minrank \leq 2$.

\begin{df}
{\bf 2-Chain Property :} Let $T \subseteq 2^{[m]}$ with $|T|=2k$ for some $k>0$. 
We say that $T$ has the {\it 2-chain property}  there exist $A,B \subseteq T$ with {\it (i)} $A \cup B = T$, {\it (ii)} $|A| = |B| = k+1$,
and {\it (iii)} there exist orderings of elements of $A$ and $B$, $A = \{A_1,A_2,\dots,A_{k+1}\}$ and $B = \{B_1,B_2,\dots,B_{k+1}\}$, such that 
\begin{itemize}
\item $A_1 = \phi$
\item $A_{k+1} = [m]$
\item $A_i \subsetneq A_{i+1}$ , for all $1\leq i\leq k$
\item $B_i = \bar{A_i}$, for all $1\leq i \leq k+1$
\end{itemize}
{\bf Ex:} Consider $T\subseteq 2^{[3]}$ defined as $T= \{ \phi, \{1,2\}, \{1,2,3\}, \{3\}\}$, then $T$ has a $2$-chain property by setting $A_1 = \phi, A_2 = \{1,2\}$ and $A_3=\{1,2,3\}$.
\end{df}

\begin{df}
{\bf Complement of a sign vector : } Let $s$ be any sign vector in $\{-,0,+\}^n$, complement of $s$ is  $\bar{s}$ such that $\bar{s}_i = - s_i, 1\leq i\leq n$.
\end{df}

\begin{lemma}
\label{lemma:cp_eq_mr2}
If $S$ is an $m \times n$ {\spm}, such that:
\begin{enumerate}
\item all columns are distinct,
\item some column of $S$ is the column of all $+$s,
\item for every $i\in [n]$ there exists $j\in [n]$ such that $c_i = \bar{c_j}$ (where $c_i$ and $c_j$ are the sign vectors corresponding to columns $i$ and $j$ respectively).
\end{enumerate}
Then $\minrank(S) \leq 2$ if and only if $T(S)$ has the 2-chain property.
\end{lemma}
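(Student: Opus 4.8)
The plan is to use the stated characterization that $\minrank(S)\le 2$ iff there is a two-dimensional subspace $V\subseteq\R^m$ realizing $S$, and to pin down exactly which full-support sign patterns such a $V$ can realize. Writing $V$ as the column space of an $m\times 2$ matrix with rows $(p_i,q_i)$, a vector of $V$ in direction $(s,t)$ has $i$th coordinate $p_is+q_it$, whose sign flips precisely as the direction $(s,t)$ crosses the central line $\{p_iX+q_iY=0\}$. Thus the full-support sign patterns of $V$ are exactly the sign vectors of the open sectors cut out by these $m$ central lines, read off as the direction sweeps the circle; crossing one line flips exactly one coordinate (or, when several rows are parallel, a fixed block of coordinates).

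The heart of the argument is the following structural claim: if $V$ realizes the all-$+$ pattern, then the family of negative-supports $\{T^-(w): w\text{ a full-support pattern of }V\}$ is exactly a chain $\mathcal A=\{\emptyset=A_1\subsetneq\cdots\subsetneq A_{k+1}=[m]\}$ together with its complement family $\bar{\mathcal A}=\{\bar A_i\}$. I would prove this by sweeping the direction angle $\theta$ starting from the all-$+$ sector, which exists precisely because the presence of the all-$+$ pattern forces the row directions to lie in an arc of length $<\pi$. As $\theta$ increases, each coordinate turns negative exactly once before the antipodal all-$-$ sector is reached, so $T^-$ grows monotonically from $\emptyset$ to $[m]$, giving the chain $\mathcal A$; the opposite half of the circle is antipodal and hence yields precisely the complements $\bar{\mathcal A}$. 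Corollary~\ref{cor:maxsign2} is the quantitative shadow of this picture: a maximal chain-plus-complements has exactly $2m$ members.

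Granting the structural claim, both directions are short. For the forward direction I take a realizing two-dimensional $V$; by hypothesis~(2) it contains the all-$+$ pattern, so its negative-supports are $\mathcal A\cup\bar{\mathcal A}$ with $\mathcal A$ a chain from $\emptyset$ to $[m]$. Setting $A:=T(S)\cap\mathcal A$ gives a subchain still containing $\emptyset$ and $[m]$, and $B:=\bar A$; hypothesis~(3) (closure under complementation) gives $B\subseteq T(S)$ and $A\cup B=T(S)$, while $A$ and $B$ meet only in $\{\emptyset,[m]\}$. Since no full-support vector is its own complement, hypothesis~(3) also makes $n=|T(S)|$ even, say $2k$, and the complementation bijection then forces $|A|=|B|=k+1$ — exactly the 2-chain property. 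For the converse I build $V$ from a given 2-chain decomposition $T(S)=A\cup B$: assign every coordinate added at the $j$th step of the chain $A$ a common angle $\psi_j$, with $\psi_1<\cdots<\psi_k$ chosen inside a tiny arc; the sweep then produces sectors whose negative-supports are exactly $A$ (in order) followed by $\bar A=B$, so $V$ realizes every column of $S$ and $\minrank(S)\le 2$.

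The main obstacle is proving the structural claim rigorously rather than pictorially, and especially handling degeneracies: when two rows are parallel the corresponding coordinates flip together, so $\mathcal A$ need not be a maximal chain and a step may add several coordinates at once. I must check that the monotone-growth argument survives this, which it does, since the negative-arc of each coordinate still has length $\pi$ and no coordinate flips back before the all-$-$ sector is reached. I also need to be careful that full-support patterns correspond to open sectors (so no coordinate is $0$), that the all-$+$ hypothesis is exactly what anchors the chain to run from $\emptyset$ to $[m]$ rather than from a shifted pattern, and that $w\mapsto T^-(w)$ is a bijection on full-support vectors, so that distinctness of columns (hypothesis~(1)) matches distinctness of the sets comprising $T(S)$.
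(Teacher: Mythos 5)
Your proposal is correct and takes essentially the same route as the paper's own proof: the paper likewise shows that the full-support sign patterns of a two-dimensional subspace containing the all-$+$ vector form a monotone chain together with its complements, only it parametrizes the sweep affinely as $\sgn(Y-\epsilon X)$ for $\epsilon\in[0,1]$ (after normalizing $X$ to the all-ones vector and $Y$ to have entries in $(0,1)$) instead of sweeping an angle around the circle. Its converse construction, setting $y_j = 1/i$ for coordinates entering the chain at step $i$, is exactly the affine counterpart of your tiny-arc angle assignment.
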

\begin{proof}
First suppose $\minrank(S) \leq 2$.
If $\minrank(S) = 1$, then $T(S) = \{\phi, [m]\}$, and so $T(S)$ trivially has the 2-chain property.
If $\minrank(S) = 2$. Then we can assume without loss of generality that there is a 2-dimensional space $V$ which realizes $S$
of the following special form: $V = \spn \{ X, Y\}$, where $X =[1,1,\dots,1]$ and $Y=[y_1,y_2,\dots,y_m]$, where $0<y_i < 1$ for all $i\in [m]$.
The only sign patterns appearing in the subspace spanned by $X$ and $Y$ are $F_1 = \{ \sgn(Y-\epsilon X) \mid \epsilon \in [0,1]\}$ 
and $F_2 = \{ \bar{s} \mid s \in F_1 \}$. When $\epsilon = 0$, $T^-(sign(Y)) = \phi$. Also, $T^-(sign(Y - \epsilon_1 Y)) \subseteq T^-(sign(Y - \epsilon_2 Y))$ for all $\epsilon_1,\epsilon_2$ where $\epsilon_2 > \epsilon_1$.  Finally, $T^-(Y-X) = [m]$. 
This shows that the set system $F$ defined by $F = \{T^{-}(v) \mid v \in V \}$ has the 2-chain property.
Since $T(S)$ is a subset of $F$ that is closed under complement (by condition 2), $T(S)$ also has the 2-chain property.

Now suppose $T(S)$ has the 2-chain property. We will construct vectors $X=[x_1,x_2,\dots,x_m]$ and $Y=[y_1,y_2,\dots,y_m]$ in $\mathbb{R}^m$ such that $V = \spn\{X,Y\}$ realizes $S$. Set $X=[1,1,\dots,1]$. Since $T(S)$ has the 2-chain property, there exist $A,B\subseteq T(S)$ of size $\frac{n}{2}+1$ as required by the definition
of the 2-chain property. Let $(A_1,A_2,\dots,A_{\frac{n}{2}+1})$ be the ordering of elements in $A$ such that $A_1 = \phi$, $A_{\frac{n}{2}+1} = [m]$ and $A_i \subset A_{i+1}$ , for all $1\leq i\leq \frac{n}{2}$. Set $y_j = \frac{1}{i}$ if $j\in A_i$ but $j\notin A_{i-1}$.
We now show that $V = \spn\{X, Y \}$ realizes $S$ (and hence $\minrank(S) = 2$). Let $c$ be a sign vector corresponding to some column of $S$. Either $T^-(c) = A_i$ or $T^-(\bar{c})= A_i$ for some $i\in [\frac{n}{2}+1]$. If $T^-(c) = A_i$  then $\sgn(X-(i+\frac{1}{2})Y) = c$ otherwise $\sgn((i+\frac{1}{2})Y-X) = c$. Hence the sign vector $c$ is contained in $V$. This completes the proof.
\end{proof}

\begin{lemma}
\label{lemma:cp_in_poly}
Given $S\subseteq 2^{[m]}$, we can check in polynomial time if $S$ satisfies the 2-chain property or not.
\end{lemma}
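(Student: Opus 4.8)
The plan is to reduce the 2-chain property to a clean poset condition that can be tested by a standard longest-chain computation. Let $S \subseteq 2^{[m]}$ be the input family. I first record the easy necessary conditions. If $S$ has the 2-chain property, then $S = A \cup B$ with $B = \bar A$ (the elementwise complement of the family $A$), so $\bar S = \bar A \cup \bar B = B \cup A = S$; thus $S$ must be closed under complementation, must contain $\phi$ and $[m]$, and must have even size $|S| = 2k$. Each of these is checkable in polynomial time: for closure, compute $\bar X$ for every $X \in S$ and test membership.

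The crux, which I will prove, is the following equivalence: for a complement-closed family $S$ with $|S| = 2k$, the family $S$ has the 2-chain property if and only if the poset $(S, \subseteq)$ contains a chain (a totally ordered subfamily) consisting of $k+1$ sets. The forward direction is immediate, since the family $A$ guaranteed by the property is exactly such a chain. For the converse, suppose $C = \{C_1 \subsetneq \cdots \subsetneq C_{k+1}\}$ is a chain in $S$, and set $A = C$ and $B = \bar C$. The only point requiring argument is that $C \cup \bar C$ is all of $S$, and this follows from computing $|C \cap \bar C|$.

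The key observation driving this computation is that within any chain, the only complementary pair that can occur is $\{\phi, [m]\}$, since $X \subseteq \bar X$ forces $X = \phi$. Hence $C \cap \bar C = \{X \in C : \bar X \in C\} \subseteq \{\phi, [m]\}$, so $|C \cap \bar C| \le 2$. On the other hand, $C$ and $\bar C$ are both chains of $k+1$ sets contained in $S$, so by inclusion--exclusion $|C \cup \bar C| = 2(k+1) - |C \cap \bar C| \le |S| = 2k$, forcing $|C \cap \bar C| \ge 2$. Therefore $|C \cap \bar C| = 2$, which pins down $\{\phi, [m]\} \subseteq C$ (so $C_1 = \phi$ and $C_{k+1} = [m]$) and gives $|C \cup \bar C| = 2k$, whence $C \cup \bar C = S$. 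Setting $A_i = C_i$ and $B_i = \bar C_i$ then verifies every clause of the 2-chain definition. (Incidentally this also shows that no chain in $S$ can have more than $k+1$ sets, so the threshold test below is unambiguous.)

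Finally, the algorithm: reject unless $|S| = 2k$ is even and $S$ is complement-closed; then build the comparability DAG on $S$ with an edge $X \to Y$ whenever $X \subsetneq Y$, and compute the length of its longest path by the standard dynamic program over a topological order. Accept if and only if the longest chain has at least $k+1$ sets. All steps run in time polynomial in $|S|$ and $m$. The main obstacle is just the structural equivalence above, and in particular the cardinality argument pinning $|C \cap \bar C| = 2$; once that is in hand, the computational part is the routine longest-path-in-a-DAG problem.
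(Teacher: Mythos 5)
Your proof is correct, and it takes a genuinely different route from the paper's. The paper's algorithm is a greedy chain construction: after checking the same easy necessary conditions (evenness, complement-closure, $\phi,[m]\in S$), it builds $A$ starting from $A_1=\phi$ by repeatedly selecting the \emph{unique} minimum-sized set among the remaining members of $S$ that contains the current $A_{i-1}$, declaring failure if that set is not unique or if $[m]$ is reached before step $k+1$; its correctness rests on arguing that these greedy choices are forced (the paper's justification that a tied set has ``no place'' in either chain is rather terse). You instead prove a structural equivalence --- for a complement-closed family with $|S|=2k$, the 2-chain property holds if and only if the poset $(S,\subseteq)$ contains a chain of $k+1$ sets --- and delegate the search to a longest-path computation on the comparability DAG. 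The heart of your argument is the counting step: a chain contains at most one complementary pair, namely $\{\phi,[m]\}$ (since $X\subseteq\bar{X}$ forces $X=\phi$), so $|C\cap\bar{C}|\leq 2$, while inclusion-exclusion against $|C\cup\bar{C}|\leq|S|=2k$ forces $|C\cap\bar{C}|\geq 2$; equality then simultaneously yields $\phi,[m]\in C$ and $C\cup\bar{C}=S$, so every clause of the definition is verified at once. This is airtight. What your approach buys: correctness is self-contained, any maximum chain serves as a witness (no need to justify forced greedy choices), and you get for free that no chain in $S$ exceeds $k+1$ sets, so the threshold test is unambiguous. What the paper's approach buys: a single greedy pass rather than an all-pairs comparability computation, and, when it succeeds, the observation that the witness chain is essentially forced. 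Both run in polynomial time, so either argument suffices for the lemma.
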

\begin{proof}
The conditions that $\phi,[m] \in S$, $|S|$ being even, and that for all $U \in S , [m]\setminus U \in S$ are easy to check. Suppose $S$ satisfies all these properties. Let $|S|=2k$ for some $k>0$. We will try to find two chains $A=(A_1,A_2,\dots,A_{k+1}),B=(B_1,B_2,\dots,B_{k+1})$ of size $k+1$ each such that $A\cup B = S$ and $B_i = \bar{A_i}$, for all $1\leq i \leq k+1$. Since we assumed that for all $s\in S , \bar{s}\in S$, the construction of chain $A$ will give us a chain $B$. We start with $A_1=\phi$. For $i=2$ to $k+1$, we will set $A_i$ to be the unique minimum sized set in $S\setminus\{A_1,\dots,A_{i-1}\}$ that contains $A_{i-1}$. We repeat this process
until we get $(A_1, \ldots, A_{k+1})$, and we verify that $A_{k+1} = [m]$.

There are two ways this process may fail.
Firstly, it could be that during iteration $i$, there are two minimum sized subsets in $S\backslash\{A_1,\dots,A_{i-1}\}$. In this case if we set $A_i$ to any one of them, then there is no place for the other one in either chain $A$ or chain $B$, and hence $S$ does not satisfy the 2-chain property. The other case is when the process sets $A_j=[n]$ for some $j<k+1$. In this case, the set $(S\backslash\{A_1,\dots,A_{j}\} )\cup \{B_{k+1},B_k,\dots,B_{k-j+2}\}$ is non empty, and this implies that $S$ does not have the 2-chain property.
\end{proof}
\begin{theorem}

\label{thm:effcomp}
There is a polynomial time algorithm to check if a given {\spm} $S$ satisfies $\minrank(S) \leq 2$.
\end{theorem}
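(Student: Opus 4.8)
The plan is to preprocess an arbitrary \spm{} $S$ into an equivalent one satisfying the three hypotheses of Lemma~\ref{lemma:cp_eq_mr2} without changing its minimum rank, and then simply invoke Lemmas~\ref{lemma:cp_eq_mr2} and~\ref{lemma:cp_in_poly}. The whole argument rests on the characterization recorded earlier: $\minrank(S)$ equals the smallest dimension of a subspace $V \subseteq \R^m$ realizing $S$, i.e. containing, for every column of $S$, a vector with that sign pattern. Because this is a purely columnwise containment condition on subspaces, $\minrank$ is invariant under three operations on $S$: (a) deleting a repeated column; (b) appending the complement $\bar c$ of a column $c$ already present; and (c) multiplying a row by $-1$. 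For (a), duplicate columns impose identical constraints, so the family of realizing subspaces is unchanged. For (b), any subspace is closed under negation, so if $v \in V$ with $\sgn(v) = c$ then $-v \in V$ with $\sgn(-v) = \bar c$; hence $V$ already realizes $\bar c$ and adding it changes nothing. For (c), flipping the signs of a set of rows corresponds to the linear isomorphism $V \mapsto DV$ for a diagonal $\pm 1$ matrix $D$, which preserves dimension and carries realizing subspaces of $S$ to realizing subspaces of the row-flipped matrix.

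First I would normalize for hypothesis~(2): since $S$ has no zero entries, each entry of its first column is $+$ or $-$, so flipping the signs of exactly those rows where the first column is $-$ turns that column into the all-$+$s column, and by operation (c) this preserves $\minrank$. Next, for hypothesis~(3), I append $\bar c$ for every current column $c$ (in particular adjoining the all-$-$s column); by operation (b) this preserves $\minrank$, and the columns are now closed under complement. Finally, for hypothesis~(1), I delete repeated columns using operation (a). All three steps are computable in time polynomial in the size of $S$ and at most double the number of columns, so the resulting \spm{} $S'$ has the same minimum rank as $S$ and satisfies all three hypotheses simultaneously (the all-$+$s column is still present after the later steps). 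Moreover the columns of $S'$ occur in complementary pairs, and since $c$ and $\bar c$ have no zeros the sets $T^-(c)$ and $T^-(\bar c) = [m] \setminus T^-(c)$ are distinct; thus $|T(S')|$ is even, as required by the 2-chain property.

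With $S'$ in this normal form, Lemma~\ref{lemma:cp_eq_mr2} gives $\minrank(S') \leq 2$ if and only if $T(S')$ has the 2-chain property, and Lemma~\ref{lemma:cp_in_poly} checks the latter in polynomial time; since $\minrank(S') = \minrank(S)$, this decides whether $\minrank(S) \leq 2$. The one genuinely nonobvious point, and the main obstacle, is hypothesis~(2): one cannot simply append a fresh all-$+$s column, since this can strictly raise the minimum rank (for instance the single column $\binom{+}{-}$ has minimum rank $1$, yet adjoining $\binom{+}{+}$ forces rank $2$). The correct move is to realize the all-$+$s column from within, via the sign-flipping operation (c), which is free precisely because it preserves $\minrank$; verifying that each of (a), (b), (c) leaves the minimum rank unchanged is the technical heart of the reduction.
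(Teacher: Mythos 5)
Your proof is correct and follows essentially the same route as the paper: flip row signs to make the first column all $+$s, close the column set under complementation, and then invoke Lemmas~\ref{lemma:cp_eq_mr2} and~\ref{lemma:cp_in_poly}. If anything, you are slightly more careful than the paper's own (very terse) proof, which never explicitly addresses the lemma's hypothesis that the columns be distinct (your deduplication step (a)) nor spells out why each normalization preserves $\minrank$.
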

\begin{proof}
Since the minimum rank is invariant under flipping signs of a row, we can convert $S$ to a {\spm} $S'$ whose first columns is all $+$s.
For every column $c$ of $S'$, if the complement $\bar{c}$ is not already a column of $S'$, we include it as a column into $S'$.
It is clear that $\minrank(S') = \minrank(S)$. The theorem follows from Lemma \ref{lemma:cp_eq_mr2} and Lemma~\ref{lemma:cp_in_poly}.
\end{proof}

\subsection{ Generalized sign pattern matrices}

We now show how to reduce the problem of detecting if $\minrank \leq 2$ for {\gspm}
to the same problem for {\spm}. 
This follows immediately from the following transformation.

\begin{theorem}
There is a polynomial time algorithm, which when given as input a {\gspm} $S$, either declares that $\minrank(S) > 2$, or else 
outputs a sign pattern matrix $S'$ such that $\minrank(S') = \minrank(S)$.
\end{theorem}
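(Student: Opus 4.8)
The plan is to transform $S$ into a \spm{} $S'$ by eliminating its zero entries one mechanism at a time, checking at each step that $\minrank$ is preserved exactly (or that we may safely certify $\minrank(S) > 2$). First I would perform trivial cleanup: delete every all-zero column (the zero vector lies in every column space, so this changes nothing) and every all-zero row (a realization of the remaining rows in $\R^{m-1}$ extends to $\R^m$ by reinserting a zero coordinate, and conversely, so $\minrank$ is unchanged). After this, every row and column carries a nonzero entry.

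The central structural observation concerns the two-zero case. Writing a rank-$\le 2$ realization as $A = L R$ with $L \in \R^{m\times 2}$ and $R\in\R^{2\times n}$, we have $A_{ik} = \ell_i^\top r_k$, where $\ell_i$ is the $i$-th row of $L$ and $r_k$ the $k$-th column of $R$. If a column $c_j$ has zeros in two distinct rows $i \ne i'$, then $\ell_i, \ell_{i'} \perp r_j$ in $\R^2$; since $c_j$ is not all-zero we have $r_j \ne 0$, so $\ell_i, \ell_{i'}$ are parallel, forcing rows $i$ and $i'$ of $A$ to be positive or negative scalar multiples of one another, and hence rows $i$ and $i'$ of $S$ to be either identical or complementary sign vectors. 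This yields two polynomial-time moves: (i) merge any two rows of $S$ that are identical or complementary — deleting one representative preserves $\minrank$ exactly, since a duplicated (respectively negated-duplicate) row can always be reattached to a factorization without raising the rank; and (ii) after all such merges, if any column still contains two or more zeros, declare $\minrank(S) > 2$, which is correct by the observation above, because the surviving zero rows are now pairwise non-identical and non-complementary.

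At this point every column of the reduced matrix has at most one zero, and it remains to split single-zero columns. I would replace a column equal to $(p, 0_i)$ — pattern $p$ off coordinate $i$, zero at $i$ — by the two full-support columns $(p, +_i)$ and $(p, -_i)$, producing a genuine \spm{} $S'$. The heart of the argument is that this preserves $\minrank$ exactly in both directions. For the forward direction, given a realization $A$, since row $i$ is not all-zero there is a vector $w$ in the column space with $w_i \ne 0$; then $v \pm \epsilon w$ (for the realizing column $v$ and small $\epsilon$) produces sign patterns $(p,+_i)$ and $(p,-_i)$ while staying in the column space, so the rank does not increase. For the backward direction, suppose a realization of $S'$ has column-space vectors $u$ and $u'$ of signs $(p,+_i)$ and $(p,-_i)$; the convex combination $t u + (1-t) u'$ has $i$-th coordinate passing through $0$, while every other coordinate, being a convex combination of two reals of the common sign $p_k$, retains sign $p_k$. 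The value of $t$ at which coordinate $i$ vanishes yields a column-space vector of sign exactly $(p, 0_i)$, recovering the original column without raising the rank. Applying this splitting to all single-zero columns at once gives $\minrank(S') = \minrank(S)$, and every step is clearly polynomial time.

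The main obstacle I anticipate is isolating the two-zero case as a clean $\minrank > 2$ certificate and then guaranteeing that the single-zero split is \emph{reversible}: a priori, realizing the two full-support columns need not force the exact zero pattern to be realizable. The convex-combination argument is exactly what closes this gap, and crucially it works for every rank rather than only rank $2$, which is what lets the transformation report $\minrank(S') = \minrank(S)$ as an exact identity rather than a mere equivalence of the predicate $\minrank \le 2$.
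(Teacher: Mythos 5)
Your proposal is correct and follows essentially the same three-phase plan as the paper's proof: cleanup of all-zero rows and columns, elimination of columns with two or more zeros via row-merging together with a $\minrank(S) > 2$ certificate, and finally the splitting of each single-zero column into a $+$ and a $-$ copy, justified by exactly the same perturbation (forward) and convex-combination (backward) argument, valid for every rank. The only difference is one of bookkeeping: where the paper first normalizes row signs against a distinguished column (its Steps 2--4) so that rows sharing a zero column must be \emph{literally identical} in a rank-$\le 2$ realization, you work with the rank-$2$ factorization $A = LR$ directly and merge rows up to complementation --- a cleaner, symmetric execution of the same underlying fact.
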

\begin{proof}
The algorithm will work by applying a sequence of transformations
to $S$ that
will gradually reduce the number of $0$s in $S$. None of these
transformations change $\minrank(S)$. However, it may be that one of the transformations of the algorithm fails,
in which case we will be able to certify that $\minrank(S) > 2$.

{\bf Step 1:}
First remove every all-$0$ row and all-$0$ column from $S$.

{\bf Step 2:}
Now suppose $S$ is an $m \times n$ matrices.
If $S$ has $0$, then we may permute columns
and assume that the first column of $S$
has at least one $0$. 
Now by permuting rows and possibly flipping signs of rows,
we may assume that the first 
column of $S$ is a sequence of $k$ $0$s followed by a sequence of $m-k$, $+$s, 
where $0 < k < m$.

Now if $\minrank(S) \leq 2$, 
there must be an $A$ such that $\rank(A) \leq 2$
and $\sgn(A) = S$.
By scaling the rows of $A$,
we may assume that the first column of $A$
is the vector $X = (0,0, \ldots, 0, +1, +1, \ldots, +1)$, 
where there are $k$ $0$s and $m-k$ $+1$s.

Since the first row is not identically $0$, we can take another column $Y$ of
$A$ whose first coordinate is nonzero. Note that $X$ and $Y$ must be linearly independent.
Thus, every column of $A$ is of the form
$\alpha X + \beta Y$. In particular, since $A$ has no all-$0$ rows, 
all the first $k$ coordinates of $Y$ must be nonzero.

{\bf Step 3:} Pick a column of $S$ whose first coordinate is nonzero (i.e. either $+$ or $-$); if 
all the first $k$ coordinates of this column are not nonzero, declare
that $\minrank(S) > 2$. Otherwise, flip signs of the first $a$ rows of $S$ so that
the first $k$ coordinates of that column are all $+$.

Now by scaling the first $k$ rows of $A$,
we may assume that the first $k$ coordinates of $Y$ equal $+1$.

Suppose $Y = (1,1, \ldots, 1, y_{k+1}, \ldots, y_m)$.
Suppose some column $i \in [n]$ of $S$ has at least two $0$s (if not, goto Step $5$).
Let $U \subseteq [m]$ be the indices of all the $0$ coordinates of that column.
Note that we must have
either $U = [k]$ or $U \subseteq [m] \setminus [k]$.
In either case, we have that all the coordinates of $Y$ indexed by $U$
must be equal. Thus all the rows of $A$ indexed by $U$ are identical
to one another. 

{\bf Step 4:} Pick a column of $S$ with at least two $0$s, and let $U \subseteq [m]$ be the indices
of the $0$ coordinates. If $U \neq [k]$ and $U \not\subseteq [m] \setminus [k]$, then
declare that $\minrank(S) > 2$. Otherwise, if all the rows with indices in $U$
are not identical, declare that $\minrank(S) > 2$. Otherwise, delete all but one of these rows from $S$.
Repeat this until $S$ has no column with more than one $0$.

Our algorithm therefore may check if all the rows of $S$ indexed by $U$ 
are identical to one another, and if so, removes all but one of them.
Otherwise it declares that $\minrank(S) > 2$.
In the analysis, we also perform this duplicate removal on $A$, which clearly preserves the rank.

{\bf Step 5:}
Now, take any column $C$ of $S$ with exactly one $0$.
Replace $C$ with two columns $C_+$ and $C_-$, with $C_+$ being a copy of $C$ with the $0$ replaced by $+$, and $C_-$ being a copy of $C$ with the $0$ replaced by $-$.
Repeat this until $S$ has no $0$s.

We need to show that this operation does not change $\minrank(S)$.
In one direction, if $V$ is a subspace of $\mathbb R^m$ of dimension $d$ such that
some two elements $p_+, p_-$ of $V$ have sign patterns $C_+$, $C_-$ respectively, then there must be an element of $V$ with sign pattern $C$
(by considering the line between $p_+$ and $p_-$). In the other direction, if $V$ is a subspace of $\mathbb R^m$ of dimension $d$
which does not have any identically $0$ coordinates, and $p \in V$ has sign pattern $C$, then infinitesimal perturbations
of $p$ within $V$ will have sign patterns $C_+$ and $C_-$.

Thus, if we had not already declared that $\minrank(S) > 2$,
we get a sign pattern matrix $S'$ with $\minrank(S) = \minrank(S')$. 
\end{proof}


\section{Open problems}
We conclude with some interesting open problems on minimum ranks of sign pattern matrices.
\begin{enumerate}
\item Can we compute the minimum rank of a sign pattern matrices up to an $O(1)$ additive error? We only rule out such $O(1)$ additive error polynomial time algorithms for computing the minimum rank of a {\gspm} (unless $\PTIME = \NP$).
\item Is it possible to approximate the minimum rank of a {\gspm} within some constant factor? We showed that this constant factor must be $> 4/3$.

Very recently, Alon, Moran and Yehudayoff~\cite{AMY14} gave an $O(n/\log n)$-factor approximation algorithm for the minimum rank of a sign pattern matrix.
\item How large can the minimum rank of a $n\times n$ sign pattern matrix be? We know it is $\Theta(n)$, but the precise constant in front of the $n$ is not known.
\item Can we construct explicit $n\times n$ sign pattern matrices whose minimum rank is $\omega(\sqrt{n})$?
\end{enumerate}

\nocite{*}
\bibliographystyle{alpha}	
\bibliography{refs}		

\end{document}